\newcommand{\fig}[3]{
\begin{figure}[h!]
\begin{center}
 \includegraphics #1
 \end{center}
\vspace{-7pt}
\caption{ #2}
\label{#3}
\end{figure}
}
\newtheorem{proposition}{Proposition}[section]
\newtheorem{ex}[proposition]{Example}
\newtheorem{lem}[proposition]{Lemma}
\newtheorem{defi}[proposition]{Definition}
\newtheorem{theo}[proposition]{Theorem}
\newtheorem{cor}[proposition]{Corollary}
\def\multiset#1#2{\ensuremath{\left(\kern-.3em\left(\genfrac{}{}{0pt}{}{#1}{#2}\right)\kern-.3em\right)}}
\def\section{\@startsection{section}{1}%
 \z@{.7\linespacing\@plus\linespacing}{.5\linespacing}%
 {\normalfont\bfseries\scshape\centering}}
\def\subsection{\@startsection{subsection}{2}%
  \z@{.5\linespacing\@plus\linespacing}{.5\linespacing}%
  {\normalfont\bfseries\scshape}}
\def\subsubsection{\@startsection{subsubsection}{3}%
 \z@{.5\linespacing\@plus\linespacing}{-.5em}
  {\normalfont\bfseries\itshape}}
\begin{document}
\title{Next-to${}^k$ leading log expansions by chord diagrams} 
\author{Julien Courtiel}
\address{ \vspace*{-.5cm} \small Normandie University, UNICAEN, ENSICAEN, CNRS, GREYC, \\{\tt julien.courtiel@unicaen.fr} }
\thanks{JC is supported by the French INS2I JCJC Grant \textit{ASTEC}.  KY is supported by an NSERC Discovery grant and the Canada Research Chair program.  Thanks to the referee for their insightful comments.}
\author{ \vspace*{-.5cm}  Karen Yeats }
\address{ \vspace*{-.5cm}  \small University of Waterloo, \\{\tt kayeats@uwaterloo.ca }}

\begin{abstract}
  Green functions in a quantum field theory can be expanded as bivariate series in the coupling and a scale parameter.  The leading logs are given by the main diagonal of this expansion, i.e. the subseries where the coupling and the scale parameter appear to the same power; then the next-to leading logs are listed by the next diagonal of the expansion, where the power of the coupling is incremented by one, and so on.

  We give a general method for deriving explicit formulas and asymptotic estimates for any next-to${}^k$ leading-log expansion for a large class of single scale Green functions.  These Green functions are solutions to Dyson-Schwinger equations that are known by previous work to be expressible in terms of chord diagrams.  We look in detail at the Green function for the fermion propagator in massless Yukawa theory as one example, and the Green function of the photon propagator in quantum electrodynamics as a second example, as well as giving general theorems.  Our methods are combinatorial, but the consequences are physical, giving information on which terms dominate and on the dichotomy between gauge theories and other quantum field theories.
\end{abstract}

\maketitle

\section{Introduction}

To calculate physical amplitudes in perturbative quantum field theory it suffices to understand the renormalized one-particle irreducible (1PI) Green functions in the theory.  As series one can obtain the 1PI Green functions by applying renormalized Feynman rules to sums of Feynman graphs. Choosing an external scale parameter $L$ we can think of the Green functions as multivariate series $G(x, L, \theta)$ where $x$ is the perturbative expansion parameter and $\theta$ are dimensionless parameters capturing the remaining kinematic dependence.  Our examples will all be Green functions for propagator corrections, so we can take $L=\log q^2/\mu^2$ where $q$ is the momentum flowing through, $\mu$ is the reference scale for renormalization.  Furthermore, take $x$ to be the coupling constant, and do without any $\theta$.

These Green functions will always be triangular in the sense that the power of $L$ is at most the power of $x$ in any term.  It is often useful to consider first the leading logs in $G(x,L)$, that is the subseries where $x$ and $L$ come with the same power, then the next-to leading logs which are suppressed by one power of $x$ and so on.  As a whole this gives the log expansion as in the following definition.

\begin{defi}
  Write
\[
G(x,L) = 1 + \sum_{k \geq 1} H_k(xL)x^k = 1 + \sum_{k\geq 1}\sum_{i\geq 0} g_{i, i+k}(xL)^i x^k 
\]
Then
\[
H_k(z) = \sum_{i\geq 0} g_{i, i+k}z^i
\]
is the next-to${}^k$ leading log (N${}^k$LL) part of $G(x,L)$.
\end{defi}

Our results give an automatable and combinatorial method to obtain closed forms for any particular $H_k$ coming from a broad family of propagator-type Dyson-Schwinger equations.  The general shape of solutions only depends on a single parameter of the Dyson-Schwinger equation which controls how the number of insertion places grows as the loop order grows.  Different theories with isomorphic diagrammatics only show their difference in their dependence on the coefficients of the expansions of the regularized Feynman integrals of the primitive\footnote{Here primitive means primitive in the renormalization Hopf algebra, or equivalently having no proper subdivergences.} Feynman graphs of the theory contributing to the Dyson-Schwinger equation.  These coefficients are the $f_i$ of \cite{MYchord} and our previous work \cite{CYchord} and are the $a_{k,i}$ of \cite{HYchord}, a notation which we will also use in the present paper.
The way these coefficients come in is well-controlled; for example in all cases the full $H_k$ depends only on coefficients coming from the first $k$ terms of expansions of primitive graphs with at most $k+1$ loops (this is clear and long-known in physics and is also an immediate corollary of the main results of \cite{MYchord} and \cite{HYchord}).  Furthermore, we can determine for every $k$ which terms will dominate $H_k$.

Kr\"uger and Kreimer in \cite{KKllog} also give an analysis of log expansions of Green functions from Dyson-Schwinger equations with similar goals and outcomes.  The underlying combinatorial structures we use are quite different, with the consequence that some aspects are clearer to see in our perspective.  In particular, our approach is more fully automatable and better suited to resolve questions of domination or asymptotic analysis.  A fuller comparison of our work with theirs is given in Section~\ref{subsec compare}.

Our underlying combinatorial structures are decorated rooted connected chord diagrams.  Previous work of one of us with Marie \cite{MYchord} and Hihn \cite{HYchord} showed how to solve Dyson-Schwinger equations as sums indexed by these chord diagrams.  To see that the chord diagram expansions are physically interesting, compare them to the original Feynman diagram expansions of the Green functions.  The latter are also sums indexed by combinatorial objects, but each object contributes in a highly nontrivial manner, namely it contributes the renormalized Feynman integral.  The chord diagram expansions are indexed by decorated rooted connected chord diagrams, but each diagram contributes a small polynomial in $L$ with coefficients monomials in the numbers from the expansions of the Feynman integrals of the primitives.
 So the only analytic content is in the Feynman integrals of the primitives; the rest simply consists of putting the coefficients of the expansions of the primitives together in purely combinatorial ways.  Furthermore, the shape of the answer depends only on this insertion growth parameter mentioned above, so there is a general form for different theories with the same insertion properties in their diagrammatics.

The chord diagram expansions are interesting for physics because they separate the analytic and combinatorial information.  The N${}^k$LL log expansions are a particularly good example of this as the chord diagram expansion lets us read off the various log expansions as sums of controllable monomials, rather than hiding pieces among different Feynman diagrams.  We give our work in this paper as a specific example of how the chord diagram expansions can give concrete, physically interesting results.

Note, unusually, that these chord diagrams do \emph{not} come from any sort of Wick pairing.  The rigorous proofs of these chord diagram expansions \cite{MYchord, HYchord} are inductive and do not explain why these objects appear.

Previous work of both of us \cite{CYchord} considered the original chord diagram expansion of \cite{MYchord} which applies only in the case of one particular shape of Dyson-Schwinger equation.  We were able to prove statistical properties of various chord diagram parameters which are relevant to the expansion and we were able to understand the log expansions.  This called for an extension to more general propagator-type Dyson-Schwinger equations; these have chord diagram expansions from \cite{HYchord}.  Herein we study the log expansions in all these cases; thus obtaining an automatable, combinatorial method for computing the $H_k$ for physically relevant propagator Dyson-Schwinger equations.

\section{Background}

\subsection{Dyson-Schwinger equations}

We are interested in Dyson-Schwinger equations for 1PI Green functions for massless propagator corrections in a quantum field theory.  At the present time we do not have analogous tools available for vertex corrections or for coupled systems of Dyson-Schwinger equations, nor do we consider masses.  This is discussed further below.

Thus, the Green functions we deal with are all single scale, and so we will view them as functions of two variables, the coupling constant $x$ and the kinematical scale variable $L=\log(q^2/\mu^2)$ where $q$ is the external momentum and $\mu$ is the reference scale with respect to which we renormalize by subtraction\footnote{Note that because of an unfortunate sign convention originating in \cite{kythesis} and carried throughout related work, the sign of $L$ is opposite to what would be in most of the physics literature, and so it is perhaps best to view $L$ as $-\log(q^2/\mu^2)$.  In particular this leads to a sign difference in the comparison with the work of Kr\"uger and Kreimer in Section~\ref{subsec compare}.}.
The corresponding Dyson-Schwinger equations take the form of integral functional equations for the Green function where the integral kernels are the Feynman integrals of the primitive skeleton diagrams into which we insert and the recursive appearances of the Green function correspond to these insertions.  The recursive appearances of the Green function will always be in the denominator since these are propagator insertions and so we are working with the inverse propagator, or equivalently we are inserting geometric series in the propagator correction diagrams.

However, this is not the form of the Dyson-Schwinger equation that was most convenient for \cite{MYchord, HYchord} where the chord diagram expansion is developed.  We proceed to transform the equation first by expanding out the recursive appearances of the Green function as formal series in two variables, expanding both the Green functions themselves and the geometric series in them.  This gives logarithms of the momenta of the edges on which we insert in the numerator, and we trade these for powers and derivatives using $\partial^\ell_\rho (p^2)^\rho|_{\rho=0} = \log^\ell(p^2)$.  Then exchanging the derivatives with the integral and reassembling the series, we have brought the recursive appearances of the Green function outside the integral at the cost of replacing their kinematical argument by a differential operator.  The equation we are left with has the form
\[
G(x, L) = 1 - \sum_{k \geq 1}x^kG(x, \partial_{-\rho})^{1-sk}(e^{-L\rho}-1)F_k(\rho)|_{\rho=0}
\]
where $G(x,L)$ is the Green function, $F_k(\rho)$ is the sum of the Feynman integrals of the $k$-loop primitives regularized by $\rho$ on the edge or edges where we insert, and $s$ is a parameter counting how the number of insertion places grows.  An example of these manipulations can be found in \cite{kythesis} or \cite{Ymem} as a running example (see in particular Example 3.5).

\medskip

The $s$ parameter requires some additional explanation.  It captures how the number of insertion places grows with the loop order.  For example, the one loop photon correction diagram in QED has no internal photon edges so no insertion places for itself.  The two loop photon correction diagram in QED has one internal photon edge, so one insertion place.  Any three loop photon correction diagram has two internal photon edges, and so on.  Here the number of insertion places grows by 1 as the loop order increases by 1, and so $s=1$.  In contrast the one loop fermion correction diagram either in QED or in Yukawa theory has one internal fermion edge, while any two loop fermion correction diagram in such a theory has three internal fermion edges, and any three loop fermion correction has five.  Thus in this case $s=2$.  Similar counting for the propagator in $\phi^3$ theory gives $s=3$.

It is perhaps surprising that the $s$ parameter is important, but it is because it controls the structure and growth of the Feynman diagrams that are built by the Dyson-Schwinger equation.  This importance was already seen in \cite{HYchord}, but here looking at the asymptotics we see an even more striking manifestation of the importance of $s$, in that we find a dichotomy between $s=1$ and $s>1$.  This dichotomy appears throughout our results in subsequent sections and some thoughts on the physical meaning of it are discussed in the final section.

\medskip

Additionally, observe that the Feynman integrals for the primitives are regularized by a single parameter $\rho$.  This means that a single parameter is involved in all propagators into which the diagrammatic version of the Dyson-Schwinger equation inserts.  When there are multiple insertion places, we take each possibility for the Feynman integral for the primitive regularized at one of the insertion places and then sum these and divide by the number of insertion places. For more on symmetric insertion see Section 2.3.3 of \cite{kythesis} or \cite{Ymem}.  Symmetric insertion gives the sum over all Feynman diagrams built by inserting into each of the insertion places used in the symmetric insertion, but we are constrained to get them all together as a linear combination.

\medskip

Taking all of the above into consideration, the Dyson-Schwinger equations that were able to be solved with chord diagrams in \cite{HYchord} remain restricted to the single scale case, and so correspond only to Dyson-Schwinger equations for propagator insertions.  Within the context of propagator insertions, they are general whenever symmetric insertion suffices.  Current work of one of us with a student, Lukas Nabergall, looks to avoid the need for symmetric insertion; we expect similar chord diagram descriptions of the solution to hold there as well, and as a consequence to prove that symmetric insertion was in retrospect no restriction.

Additionally, the work of \cite{HYchord}, and hence our work here, does not deal with coupled systems of Dyson-Schwinger equations, nor with nonzero masses.  We expect that similar results will hold for systems, but this remains work for the future.  Vertex insertions also remain for the future and will be more complicated because of the various angles involved, though we hope in the spirit of \cite{BrK}, that it will also prove manageable.

Nonetheless, the currently available chord diagram expansions are physically relevant, capturing Dyson-Schwinger equations for propagator corrections, and consequently understanding more about the log expansions and their asymptotics in these cases is of interest.

\subsection{Chord diagrams and chord diagram expansions}

In \cite{HYchord}, Hihn and one of us proved that the solution of the Dyson-Schwinger equation
\begin{equation*}
G(x, L) = 1 - \sum_{k \geq 1}x^kG(x, \partial_{-\rho})^{1-sk}(e^{-L\rho}-1)F_k(\rho)_{\rho=0}
\end{equation*}
can be seen as a sum indexed by some combinatorial objects: the \textit{decorated connected chord diagrams}. The description of this function is quite intricate and uses the coefficients $a_{k, i}$ of the Laurent series expansion of $F_k(\rho) = \sum_{i\geq 0} a_{k, i} \rho^{i-1}$. The full description will be displayed at Theorem~\ref{theo:startingpoint}, but many definitions are required beforehand. Note that the statement of this theorem has been updated since \cite{HYchord}: some parameters intervening in the chord diagram expansions (notably the weight of a diagram) have been clarified in \cite{coyeze}.

\begin{defi}[Chord diagram] A \emph{chord diagram} (or \emph{diagram} in short) is a collection of ordered pairs $(a_1,b_1), \dots, (a_n,b_n)$ such that  $\{a_1,b_1,\ldots,a_n,b_n\} = \{1,2,\ldots, 2n\}$  and such that for each $1 \le i \le n$, we have $a_i < b_i$. The pairs $(a_i,b_i)$ are the \emph{chords} of the diagram. The \emph{root chord} is the unique chord whose first component is $1$. The number of chords in $C$ is denoted $|C|$.
\end{defi}


Chord diagrams are often represented by drawing dots and arches: starting from a horizontal line of $2|C|$ dots, we draw for each chord $(a_i,b_i)$ an arch that links the $a_i$th leftmost dot to the $b_i$th one. This is referred as the \textit{linear representation} of a chord diagram (see \Cref{ex_diagram} for examples). Another convention exists: the \textit{circular representation}. In this convention, the $2|C|$ dots, instead of being aligned, are drawn on an oriented circle. The circular representation notably appears in~\cite{MYchord,HYchord} and explains the chord language, but we are going to adopt the linear convention for this paper, as we did in \cite{CYchord,coyeze}.

\fig{[scale=2]{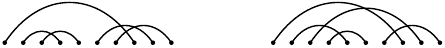}}{\textit{Left.} Linear representation of the chord diagram $(1,8),(2,4),(3,5),(6,9),(7,10)$. \textit{Right.} Linear representation of the diagram $(1,8),(2,5),(3,9),(4,6),(7,10)$.}{ex_diagram}

\begin{defi}[Connected chord diagram]  The \emph{intersection graph} of a chord diagram $D$ is the directed graph constructed as follows: the vertices are the chords of $D$; and we have a directed edge from a chord $(a,b)$ to another chord $(a',b')$ if $a < a' < b < b'$, in which case the two chords are said to \emph{intersect}. A chord diagram is \emph{connected} if its intersection graph is (weakly) connected. The \emph{connected components} of a diagram are the sets of chords corresponding to the (weakly) connected components of the intersection graph.
\end{defi}

Note that the condition $a < a' < b < b'$ implies that chords $(a,b)$ and $(a',b')$ are crossing in the linear representation of the diagram. Consequently, a diagram is connected when the drawing of its linear representation is ``in one piece''.  
For example, the left diagram of \Cref{ex_diagram} is not connected since there are two connected components ($(1,8),(6,9),(7,10)$ on the one hand, $(2,4),(3,5)$ on the other hand), while the right diagram is connected. 

Note also that given two crossing chords, the definition above says that the directed edge between those chords goes from the chord with the smaller left end point to the chord with the larger left end point.

\begin{defi}[Terminal chord] A chord $(a,b)$ in a diagram is \emph{terminal} if every chord $(c,d)$ intersecting $(a,b)$ satisfies $c < a$. 
\label{def:terminal}
\end{defi}
Equivalently, a chord $(a,b)$ is terminal if there is no chord $(a',b')$ satisfying $a < a' < b < b'$. It means that terminals chords correspond to the vertices with no outgoing edge in the intersection graph of their diagram. 
For example, chords $(4,6)$ and $(7,10)$ are terminal in the right diagram of \Cref{ex_diagram}, and chords $4,6,7$ are terminal in the diagram of \Cref{covering}.

\begin{defi}[Intersection order] The \emph{intersection order} of the chords of a rooted connected diagram $C$ is defined as follows.
\begin{itemize}
  \item The root chord of $C$ is the first chord in the intersection order.
  \item Remove the root chord of $C$ and let $C_1, C_2, \ldots, C_k$ be the connected components of the result ordered by their leftmost vertex.
  \item For the intersection order of $C$, after the root chord come all the chords of $C_1$ ordered recursively in the intersection order, then all the chords of $C_2$ ordered by intersection order, and so on.
\end{itemize}
\label{def:intersection}
\end{defi}

\fig{[scale=1]{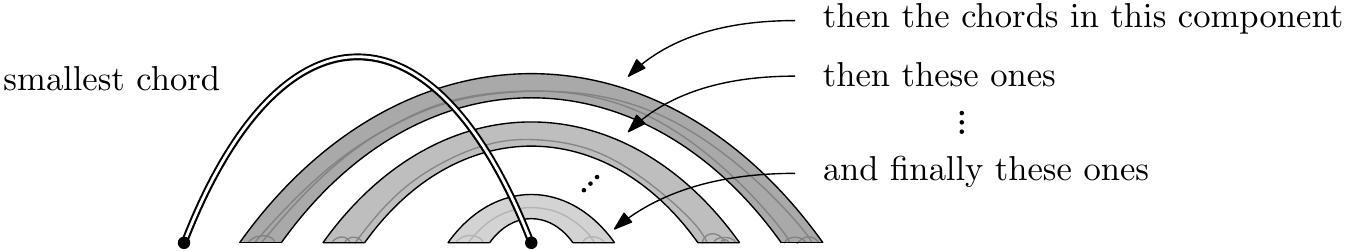}}{Illustration of the intersection order of a connected diagram.}{intersection}

The notion of intersection order is schematized  by \Cref{intersection}. For instance, the intersection order for the right diagram of \Cref{ex_diagram} is  $(1,8) < (2,5) < (3,9) < (7,10) < (4,6)$. Remark that the intersection order does \textit{not} correspond to the left-right ordering since $(4,6)$ is after $(7,10)$ for the intersection order in the previous diagram.

One can also remark that the intersection order is compatible with the intersection graph.  To this end, recall that if an edge goes from chord $c$ to chord $c'$ in the intersection graph, then the first endpoint of $c$ is smaller than the first endpoint of $c'$. However, the direction could just as well be defined in terms of the intersection order: if an edge goes from a chord $c$ to another chord $c'$ in the intersection graph, then we have $c < c'$ for the intersection order. Consequently, the last chord for the intersection order is always terminal.

From now on, we identify the chords with their positions in the intersection order (as one can see at \Cref{covering}). Thus, chord $1$ always denotes the root chord of a diagram.

\begin{defi}[Covering of a chord] \label{def:covering}
  Let $C$ be a connected diagram viewed through its linear representation. We call \emph{interval} the space between two consecutive dots. 
  
We compute the \emph{covering} of the chords of the diagram $C$ as follows. Begin by $i=1$. 
Label by $i$ all the intervals below chord $i$, replacing any previous labels. Then increase $i$ by $1$, and repeat the previous procedure until every chord has been processed in this way.

At the end of this procedure, the intervals are partitioned among the chords according to their labels.   For $i \in \{1,\dots,|C|\}$, the intervals labeled by $i$ are said to be \emph{covered by chord $i$}. Let $\omega(i)$ be the number of such intervals, minus $1$. We call this number the \emph{covering number} of chord $i$.
\end{defi}

An example of this construction is shown by \Cref{covering}. For this diagram, we have $\omega(1) = \omega(2) = 0$, $\omega(3)=\omega(5)=\omega(6)=\omega(7) = 1$ and $\omega(4) = 2$.

\fig{[scale=1.7]{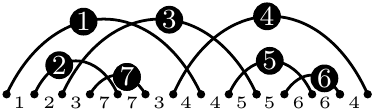}}{Connected diagram, indices of the chords for the intersection order, and covering of the chords}{covering}

The covering numbers will be used to define a ``weight'' associated to each diagram (cf. De\-fi\-nition~\ref{def:weight}).

\begin{defi}[Decorated diagram]
A chord diagram $C$ is \emph{decorated} if $C$ is equipped with a mapping $d : \{\textrm{chords of }C\} \rightarrow \mathbb{N}_{>0}$. Integer $d(c)$ is called the \emph{decoration} of chord $c$. The \emph{size} of a decorated chord diagram is the sum of the decorations and is denoted $\|C\|$.
\label{def:decorated}
\end{defi}

Do not mistake $\|C\|$ for $|C|$, which is the number of chords. An example of decorated diagram is depicted in \Cref{decorated}. To facilitate the reading, a chord with decoration $2$ has been represented as a double arch, a chord with decoration $3$ as a triple arch, while a chord with decoration $1$ has been unchanged. Thus, for this diagram, we have $d(1) = d(2) = d(5) = d(6) = 1$, $d(4) = d(7) = 2$ and $d(3) = 3$.

\fig{[scale=1.7]{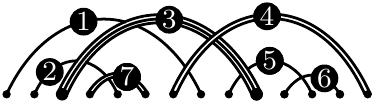}}{Decorated diagram of size $11$}{decorated}

\begin{defi}[Weight of a decorated diagram]
Let $C$ be a decorated connected diagram, and $s$ a positive integer. The \textit{weight} of $C$, denoted by $w(C)$, is the integer
\begin{equation}
w(C) = \prod_{c=1}^{|C|}\binom{d(c)s + \omega(c) -2}{\omega(c)},
\label{eq:defw}
\end{equation}
where $d(c)$ is the decoration of chord $c$, and $\omega(c)$ its covering number  (cf. Definition~\ref{def:covering}). 
\label{def:weight}
\end{defi}
Note that the weight depends implicitly on $s$, the insertion parameter from the Dyson-Schwinger equation discussed in more detail in the previous section. For example, if we consider $s=2$, the weight of the diagram $C$ in \Cref{decorated} is
\[ \prod_{c=1}^{|C|}\binom{2d(c) + \omega(c) -2}{\omega(c)} = \binom{0}{0}^2 
\binom{5}{1} \binom{4}{2} \binom{1}{1}^2 \binom{3}{1} = 90.  \]

We can finally state the theorem that is at the root of this article.

\begin{theo}[\cite{HYchord,coyeze}] 
\label{theo:startingpoint}
Let $a_{k, i}$ be the coefficients of the Laurent expansion of $F_k(\rho) = \sum_{i\geq 0} a_{k, i} \rho^{i-1}$. The Dyson-Schwinger equation 
\begin{equation}
G(x, L) = 1 - \sum_{k \geq 1}x^kG(x, \partial_{-\rho})^{1-sk}(e^{-L\rho}-1)F_k(\rho)_{\rho=0}
\label{eq:gen case}
\end{equation}
has for solution the series
\begin{equation}
G(x,L) = 1 - \sum_{\substack{\textrm{decorated connected}\\\textrm{diagram }C}} \left(\sum_{i = 1}^{t_1(C)} a_{d(t_1(C)), t_1(C)-i} \frac{(-L)^i}{i!} \right) w(C)
A(C) x^{\|C\|}.
\label{eq:solutionDS}
\end{equation}
where $t_1(C)<t_2(C)<\dots<t_\ell(C) = |C|$ lists the positions of all the terminal chords (cf. De\-fi\-nition~\ref{def:terminal}) in intersection order (cf. De\-fi\-nition~\ref{def:intersection}); $d(c)$ is the decoration of chord $c$ (cf. De\-fi\-nition~\ref{def:decorated}); $w(C)$ denotes the weight of diagram $C$ (cf. De\-fi\-nition~\ref{def:weight});
and  
\begin{equation} A(C) = \prod_{\substack{c \text{ not}\\ \text{terminal}}}a_{d(c), 0}\prod_{j=2}^{\ell}a_{d(t_j(C)), t_j(C)-t_{j-1}(C)}
\label{eq:defA}
\end{equation}
($\ell$ is the number of terminal chords).
\end{theo}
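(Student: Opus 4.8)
The plan is to verify that the right-hand side of \eqref{eq:solutionDS} satisfies the functional equation \eqref{eq:gen case}, together with the observation that \eqref{eq:gen case} determines its solution uniquely. Uniqueness is immediate from triangularity: the prefactor $x^k$ in front of the $k$-loop term means that the coefficient of $x^n$ on the right of \eqref{eq:gen case} involves $G$ only through its coefficients of $x^{m}$ with $m<n$, so the series is pinned down order by order. I would therefore write $G(x,L) = 1 + \sum_{n\ge 1}\gamma_n(L)x^n$, feed the diagram ansatz in as the inductive hypothesis at all orders below $n$, and match the coefficient of $x^n$. The engine driving the induction is the recursive decomposition underlying the intersection order (\Cref{def:intersection}): a connected diagram is a root chord of some decoration $k$ together with the ordered sequence $C_1,\dots,C_r$ of connected components obtained by deleting it, each $C_j$ being a smaller connected diagram. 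This decomposition is exactly parallel to the recursive appearances of $G$ inside the $k$-loop term of \eqref{eq:gen case}.

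Fixing the root decoration $k$, I would analyse the single term $-x^k\,G(x,\partial_{-\rho})^{1-sk}(e^{-L\rho}-1)F_k(\rho)|_{\rho=0}$. The non-integer power is expanded by the generalized binomial theorem, $(1+u)^{1-sk}=\sum_{m\ge0}\binom{1-sk}{m}u^m$ with $u=G-1$, and here the crucial algebraic identity is $\binom{1-sk}{m}=(-1)^m\binom{sk+m-2}{m}$, which already exhibits the binomial shape appearing in the weight of \Cref{def:weight}. Substituting the inductive expansion of each of the $m$ factors of $u$ turns $u^m$ into a sum over ordered $m$-tuples of sub-diagrams, to be identified with the components $C_1,\dots,C_r$; and each factor $\gamma_n(\partial_{-\rho})$ contributes $\partial_{-\rho}$-derivatives whose order is the power of $L$ in the monomial selected from it.

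Next I would extract the residue at $\rho=0$. Since $F_k(\rho)=\sum_{i\ge0}a_{k,i}\rho^{i-1}$ has a simple pole and $e^{-L\rho}-1=\sum_{i\ge1}(-L)^i\rho^i/i!$, the evaluation at $\rho=0$ couples the derivatives coming from the inserted sub-diagrams with the Taylor coefficients of $e^{-L\rho}$ and selects precisely the Laurent coefficients $a_{k,\cdot}$ whose second index is shifted by the number of derivatives that accumulate before the pole is consumed. The target of this step is to show that non-terminal chords each contribute a pole residue $a_{d(c),0}$, that the gaps $t_j-t_{j-1}$ between successive terminal chords in intersection order contribute $a_{d(t_j),\,t_j-t_{j-1}}$, and that the initial block up to the first terminal chord contributes the $L$-polynomial $\sum_{i=1}^{t_1}a_{d(t_1),\,t_1-i}(-L)^i/i!$, thereby reassembling $A(C)$ from \eqref{eq:defA} together with the parenthesized factor of \eqref{eq:solutionDS}.

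The hard part, and the genuine combinatorial content, is the identification of the analytic derivative count with the covering statistic: I must prove that the number of $\partial_\rho$-derivatives acting effectively at the level of a chord $c$ is exactly its covering number $\omega(c)$ of \Cref{def:covering}, so that the per-level binomials $\binom{sk+m-2}{m}$ telescope through the recursion into the product $w(C)=\prod_{c}\binom{d(c)s+\omega(c)-2}{\omega(c)}$, and simultaneously that the symmetric-insertion normalization (division by the number of insertion places) is absorbed without remainder. This requires an exact bijection between the derivative distributions appearing in the operator expansion and the interval-labelling of \Cref{def:covering}, propagated consistently along the decomposition into components. The final, error-prone bookkeeping is the sign reconciliation: the signs from $\partial_{-\rho}=-\partial_\rho$, from the $(-1)^m$ in $\binom{1-sk}{m}$, from the $(-L)^i$, and from the overall minus in \eqref{eq:gen case} must all cancel to reproduce the single leading minus sign in \eqref{eq:solutionDS}.
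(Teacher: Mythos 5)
First, a point of comparison: the paper does not prove Theorem~\ref{theo:startingpoint} at all. It is imported from \cite{HYchord}, with the weight corrected as in \cite{coyeze}, and the paper explicitly remarks that the rigorous proofs of these chord diagram expansions in \cite{MYchord,HYchord} are inductive and ``do not explain why these objects appear.'' So your proposal can only be measured against those cited proofs, and against that standard it is a plan rather than a proof. Your opening moves are sound: uniqueness by triangularity (the prefactor $x^k$ with $k\geq 1$ in \eqref{eq:gen case} makes the coefficient of $x^n$ depend only on lower-order coefficients of $G$), the identity $\binom{1-sk}{m}=(-1)^m\binom{sk+m-2}{m}$, and the observation that the simple pole of $F_k(\rho)$ against the Taylor coefficients of $e^{-L\rho}-1$ controls which $a_{k,i}$ survive evaluation at $\rho=0$.

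The gap is that everything you label ``the hard part'' is in fact the entire content of the theorem, and you do not supply it. Three interlocking identifications are asserted but never established: (i) that the exponent $m$ in the term $u^m$ of the binomial expansion matches the covering number of the root chord, so that the per-level binomials telescope into $w(C)=\prod_c\binom{d(c)s+\omega(c)-2}{\omega(c)}$; (ii) that the residue bookkeeping produces exactly the gaps $t_j(C)-t_{j-1}(C)$ between consecutive terminal chords in the intersection order, as in \eqref{eq:defA}; (iii) that the $L$-polynomial in \eqref{eq:solutionDS} is governed solely by the position and decoration of the \emph{first} terminal chord. All three statistics are global: the covering numbers, the intersection order, and the list of terminal-chord positions are not determined by the multiset of connected components $C_1,\dots,C_r$ left after deleting the root chord, but by how those components interleave inside $C$. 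Your induction treats the $m$-tuples of subdiagrams appearing in $u^m$ as if they could be matched componentwise with the $C_j$, but this matching is many-to-one in a way that must reproduce exactly the multiset-counting interpretation of the weight (compare Proposition~\ref{prop:wgf}), and the way types and terminal-chord data recombine under root-chord deletion is itself delicate --- the paper needs all of Lemma~\ref{lem:when_disconnection} to control a far more restricted version of this recombination. Carrying out that bijection, with the symmetric-insertion normalization and all signs tracked through it, is what the inductive proofs of \cite{MYchord,HYchord} actually consist of; without it the proposal does not establish the theorem.
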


\begin{ex}
Equation \eqref{eq:solutionDS} can be seen as a generating function of decorated connected chord diagrams weighted by some product of $a_{k,i}$ and binomials. To clarify this a bit, let us compute the contribution of the diagram $C$ in \Cref{decorated} and $s=2$. This diagram has three terminal chords with $t_1(C) = 4$, $t_2(C)=6$ and $t_3(C) = 7$, hence we have
\[A(C) = a_{1,0}^3\,a_{3,0}\,a_{1,2}\,a_{2,1}.\]
Moreover, we saw that the weight of the diagram is $90$.
Thus the contribution of $C$ in $G(x,L)$ is:
\[-90\,a_{1,0}^3\,a_{3,0}\,a_{2,1}\,a_{1,2}\,\left( a_{2,3} L + a_{2,2} \frac{L^2}{2} +  a_{2,1} \frac{L^3}{6} + a_{2,0} \frac{L^4}{24} \right)x^7. \]
\end{ex}

\begin{cor} For every $k \geq 0$, the next-to$^k$ leading-log expansion is equal to
\[ H_k(z) = - \hspace*{-15pt} \sum_{\substack{\textrm{decorated connected}\\\textrm{diagram }C\textrm{ such that}\\ \|C\| - t_1(C) \leq k \textrm{ and }\|C\| > k }} \hspace*{-15pt} a_{d(t_1(C)), t_1(C)+k-\|C\|}  w(C)
A(C) \frac{(-z)^{\|C\|-k}}{{(\|C\|-k)!}},\]
where the parameters are as defined in \Cref{theo:startingpoint}. 
\label{cor:hk}
\end{cor}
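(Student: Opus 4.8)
The plan is to extract the claimed formula directly from the chord diagram expansion of \Cref{theo:startingpoint} by matching powers of $x$ and $L$, with no additional combinatorial input required. Recall from the opening definition that $g_{i,i+k}$ is precisely the coefficient of $x^{i+k}L^i$ in $G(x,L)$, and that $H_k(z)=\sum_{i\geq 0}g_{i,i+k}z^i$, so the exponent of $z$ in $H_k$ simply tracks the power of $L$ (equivalently the power of $xL$ after the factor $x^k$ is removed). The entire argument is therefore a careful bookkeeping of which diagrams feed which coefficient.

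First I would read off from \eqref{eq:solutionDS} that a single decorated connected diagram $C$ contributes to $G(x,L)$ only at the power $x^{\|C\|}$, and within that power contributes the $L$-polynomial $-\left(\sum_{j=1}^{t_1(C)}a_{d(t_1(C)),t_1(C)-j}\frac{(-L)^j}{j!}\right)w(C)A(C)$. Hence the coefficient of $x^{\|C\|}L^j$ contributed by $C$ is nonzero only for $1\leq j\leq t_1(C)$, and then equals $-a_{d(t_1(C)),t_1(C)-j}\frac{(-1)^j}{j!}w(C)A(C)$. Summing over all diagrams, the coefficient $g_{i,i+k}$ of $x^{i+k}L^i$ is obtained by restricting to diagrams with $\|C\|=i+k$ and with $i$ lying in the admissible range $1\leq i\leq t_1(C)$.

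Next I would substitute this description into $H_k(z)=\sum_{i\geq 0}g_{i,i+k}z^i$ and exchange the order of the sums over $i$ and over $C$. Since $\|C\|=i+k$ forces $i=\|C\|-k$, the outer sum over $i$ collapses: each admissible diagram contributes exactly once, at the monomial $z^{\|C\|-k}$. Under the substitution $i=\|C\|-k$, the two range constraints $1\leq i$ and $i\leq t_1(C)$ translate into the conditions $\|C\|>k$ and $\|C\|-t_1(C)\leq k$ appearing beneath the summation sign; in particular the $i=0$ term of $H_k$ receives no contribution, consistent with $G$ having no $L^0$ term beyond the constant $1$.

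Finally I would carry out the substitution $j=i=\|C\|-k$ in the coefficient, rewriting $a_{d(t_1(C)),t_1(C)-j}$ as $a_{d(t_1(C)),t_1(C)+k-\|C\|}$ and $\frac{(-1)^j}{j!}z^j$ as $\frac{(-z)^{\|C\|-k}}{(\|C\|-k)!}$, which reproduces the stated formula. I do not expect a genuine obstacle here, since the content is entirely indexing discipline; the only point demanding care is the correct conversion of the summation range $1\leq j\leq t_1(C)$ on the power of $L$ into the diagram-level constraints $\|C\|>k$ and $\|C\|-t_1(C)\leq k$, while keeping the signs and factorials aligned through the change of summation variable.
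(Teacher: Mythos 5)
Your proposal is correct and follows essentially the same route as the paper's own proof: extract the coefficient of $x^{\|C\|-k+k}L^{\|C\|-k}$ from \eqref{eq:solutionDS}, observe that a diagram $C$ contributes exactly when $i=\|C\|-k$ lies in $\{1,\dots,t_1(C)\}$ (which is the stated pair of constraints), and substitute $i=\|C\|-k$ into the coefficient. The extra bookkeeping you spell out (collapsing the sum over $i$, tracking signs and factorials) is exactly what the paper's shorter proof leaves implicit.
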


\begin{proof}To find $H_k$, we have to extract the coefficient in $ (Lx)^{\|C\|-k} x^k$ in \Cref{eq:solutionDS}. For each diagram $C$, the corresponding
 coefficient is non null if and only if $i = \|C\|-k \in \{1,\dots,t_1(C)\}$.
This is equivalent to the condition $\|C\| - t_1(C) \leq k$ and $\|C\| > k $.
Finally we substituted $i$ by $\|C\|-k$, which gives the above formula.
\end{proof}

\subsection{Description in terms of weighted generating functions}

Corollary~\ref{cor:hk} gives an explicit formula for the next-to$^k$ leading-log expansion, but it is still not very practicable for the moment. We are going to rewrite it in terms of weighted generating functions of particular chord diagrams.

\begin{defi}[Weighted generating function]
Given $\mathcal A$ a family of decorated chord diagrams, the \textit{weighted generating function} of $\mathcal A$ is defined as
\[\sum_{C \in \mathcal A} w(C) \frac{z^{\|C\|}}{\|C\|!} \]
where $w(C)$ is defined in Definition~\ref{def:weight}.
\end{defi}

\begin{ex} Assume that $s=2$ and $\mathcal A$ is the set of decorated connected chord diagrams of size $\leq 3$ with a chord of decoration $2$. Then the cardinality of $A$ is $3$: we have the one-chord diagram (decoration $2$), weighted by $1$; then the only two-chords connected diagram where the first chord is weighted by $2$, and the other chord by $1$ -- the weight of this diagram is also $1$; and finally the same diagram as before but where the decorations between the two chords have been swapped  -- this one is weighted by $3$. The weighted generating function of $\mathcal A$ is then $\dfrac{z^2}{2!} + \dfrac{z^3}{3!} + 3\, \dfrac{z^3}{3!} = \dfrac 1 2 z^2 + \dfrac 2 3 z^3$.
\end{ex}

We wish now to partition the diagrams into families such that $a_{d(t_1(C)), t_1(C)+k-\|C\|} A(C)$, which is present in Corollary~\ref{cor:hk}, is almost constant in every family. This is the purpose of the following definition.

\begin{defi}[Type of a decorated connected chord diagram] Let $C$ be a decorated connected diagram. The \emph{type} of $C$ is the triplet $\left( \Delta, \mathcal G, \mathcal D \right)$, where:
\begin{itemize}
\item $\Delta$ is the decoration of the first terminal chord, also denoted by $d(t_1(C))$;
\item $\mathcal G$ is the multiset $\{\left(d(t_j(C)),t_j(C)-t_{j-1}(C)\right)\}$ for $1 < j \leq \ell$  (remember that $\ell$ is the number of terminal chords, $d(t_j(C))$ is the decoration of the $j$th terminal chord and $t_j(C)-t_{j-1}(C)$ is the gap between the positions of the $j$th and $(j-1)$th terminal chord in the intersection order);
\item $\mathcal D$ is the multiset formed by the decorations of the chords that are neither terminal nor of decoration $1$.
\end{itemize}

\end{defi}

The point of this definition is that any number of non-terminal chords decorated by $1$ can be added in while not affecting the type.

For example, the type of the diagram from \Cref{decorated} is $(2,\{(1,2),(2,1)\},\{3\})$.

\begin{lem} 
 Let $C$ be a diagram with type $(\Delta,\mathcal G,\mathcal D)$. Then the position of the first terminal chord of $C$ is
\begin{equation}
\label{eq:t1}
t_1(C) = \|C\| - \sum_{(d,g) \in \mathcal G} (d + g - 1) - \sum_{d \in \mathcal D} (d - 1) - \Delta + 1 ;
\end{equation}
where the sums are considered with multiplicity. Therefore, over all diagrams of type  $(\Delta,\mathcal G,\mathcal D)$, the monomial $a_{d(t_1(C)), t_1(C)+k-\|C\|} A(C)$ is equal to
\[ {a_{1,0}}^{|C|-1} a_{\Delta, k - \sum_{(d,g) \in \mathcal D} (d + g - 1) - \sum_{d \in \mathcal G} (d - 1) - \Delta + 1}  \prod_{d \in \mathcal D} \frac{a_{d, 0}}{a_{1,0}} \prod_{ (d,g) \in \mathcal G} \frac {a_{d,g}}{a_{1,0}}\]
and only depends on $k$, the type and the number of chords of $C$.
\label{lem:t1}
\end{lem}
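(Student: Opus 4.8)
The plan is to establish the formula \eqref{eq:t1} for $t_1(C)$ first, and then to substitute it into the monomial $a_{d(t_1(C)), t_1(C)+k-\|C\|} A(C)$ and verify that every dependence collapses to the type data $(\Delta, \mathcal G, \mathcal D)$, the number of chords $|C|$, and $k$. The key observation driving the computation of $t_1(C)$ is that the size $\|C\|$ is the sum of all decorations, so I would organize that sum according to the role each chord plays: the first terminal chord, the other terminal chords, the non-terminal chords of decoration greater than $1$, and the non-terminal chords of decoration $1$.

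\emph{Computing $t_1(C)$.} First I would count the number of chords $|C|$ in terms of the type. The positions $t_1(C) < t_2(C) < \dots < t_\ell(C) = |C|$ of the terminal chords are determined by $t_1(C)$ together with the gaps $t_j(C) - t_{j-1}(C)$ recorded in $\mathcal G$; summing the gaps telescopes to give $|C| = t_1(C) + \sum_{(d,g)\in\mathcal G} g$. This counts all chords from position $1$ up to $|C|$, so $t_1(C)$ of them occur at positions $1,\dots,t_1(C)$ and the remaining $\sum_{(d,g)\in\mathcal G} g$ are the later terminal chords and the non-terminal chords sitting between consecutive terminal chords. Next I would split the size $\|C\| = \sum_c d(c)$ into four groups: the decoration $\Delta$ of the first terminal chord; the decorations $\sum_{(d,g)\in\mathcal G} d$ of the other terminal chords; the decorations of the non-terminal chords of decoration greater than one, which contribute $\sum_{d\in\mathcal D} d$; and the decorations of the non-terminal chords of decoration one, which each contribute $1$. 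The number of the last group is $|C| - 1 - |\mathcal G| - |\mathcal D|$ (all chords, minus the first terminal chord, minus the other terminal chords, minus the special non-terminal chords), so their total decoration equals that count. Assembling these and eliminating $|C|$ via the telescoping identity yields \eqref{eq:t1} after routine bookkeeping; the bracketed sums $\sum_{(d,g)\in\mathcal G}(d+g-1)$ and $\sum_{d\in\mathcal D}(d-1)$ precisely absorb the double-counting between size and position.

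\emph{Simplifying the monomial.} With $t_1(C)$ in hand, the second index of $a_{d(t_1(C)),\,t_1(C)+k-\|C\|}$ becomes, upon substituting \eqref{eq:t1}, exactly $k - \sum_{(d,g)\in\mathcal G}(d+g-1) - \sum_{d\in\mathcal D}(d-1) - \Delta + 1$, which depends only on $k$ and the type (the stated formula has $\mathcal G$ and $\mathcal D$ interchanged in the subscript, matching the index convention used there). For $A(C)$ I would invoke its definition \eqref{eq:defA}: the product $\prod_{c\text{ not terminal}} a_{d(c),0}$ splits as $a_{1,0}$ to the power of the number of non-terminal decoration-one chords times $\prod_{d\in\mathcal D} a_{d,0}$, and the product over $j\ge 2$ is exactly $\prod_{(d,g)\in\mathcal G} a_{d,g}$. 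Pulling out a global factor of $a_{1,0}^{|C|-1}$ and compensating by the fractions $a_{d,0}/a_{1,0}$ and $a_{d,g}/a_{1,0}$ rewrites the whole monomial in the claimed form, manifestly a function of only $k$, the type, and $|C|$.

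\emph{Main obstacle.} The conceptual steps are straightforward; the real care is bookkeeping. The delicate point is correctly counting the non-terminal decoration-one chords as $|C| - 1 - |\mathcal G| - |\mathcal D|$ and tracking the exponent of $a_{1,0}$ so that the three products in $A(C)$ recombine cleanly into a single power $a_{1,0}^{|C|-1}$ with the stated correction fractions. I would verify the final expression against the worked example (the diagram of \Cref{decorated} with $s=2$, type $(2,\{(1,2),(2,1)\},\{3\})$) to guard against an off-by-one error in either the index of $a_{\Delta,\cdot}$ or the power of $a_{1,0}$.
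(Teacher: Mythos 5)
Your proof is correct and follows essentially the same route as the paper's: the telescoping identity $|C| - t_1(C) = \sum_{(d,g)\in\mathcal G} g$ coming from $t_\ell(C)=|C|$, the decomposition of $\|C\|$ by the role of each chord (which is exactly the paper's $\|C\| = |C| + \sum_{(d,g)\in\mathcal G}(d-1) + \sum_{d\in\mathcal D}(d-1) + \Delta - 1$), elimination of $|C|$, and then direct substitution into the definition of $A(C)$. Your side observation is also right: the subscript of $a_{\Delta,\cdot}$ in the lemma's statement has $\mathcal G$ and $\mathcal D$ interchanged (a typo), and the correct index is $k - \sum_{(d,g)\in\mathcal G}(d+g-1) - \sum_{d\in\mathcal D}(d-1) - \Delta + 1$ as your substitution produces.
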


Note that the statement makes sense even when $a_{1,0}=0$ as the total power of $a_{1,0}$ in the expression is always nonnegative.

\begin{proof} Let us consider $C$ a diagram as stated by the lemma.
Using the definition of $\mathcal G$ and the fact that $t_\ell(C) = |C|$ (the last chord in the intersection order is always terminal), we deduce
\[\sum_{(d,g) \in \mathcal G} g = \sum_{j=2}^{\ell} (t_{j}(C) - t_{j-1}(C)) = t_\ell(C) - t_1(C) = |C| - t_1(C).\]
 Moreover, 
\[\|C\| = \sum_{c=1}^{|C|} d(c) = |C| + \sum_{\textrm{chord }c} ( d(c) - 1 ) = |C| + \sum_{(d,g) \in \mathcal G} (d - 1) + \sum_{d \in \mathcal D} (d - 1) + \Delta - 1 \]
(The terms indexed by chords $i$ such that $d(i)=1$ are vanishing.)
Eliminating $|C|$ from the two above equations, we prove \Cref{eq:t1}. The expression for $a_{d(t_1(C)), t_1(C)+k-\|C\|} A(C)$ is then straightforward.
\end{proof}

\begin{theo}Let $k \geq 0$. For any triplet $(\Delta,\mathcal G,\mathcal D)$ such that
\begin{itemize}
\item $\Delta$ is a positive integer;
\item $\mathcal G$ is a set of pairs $(d,g)$, where $d > 0$ and $g > 0$
\item $\mathcal D$ is a set of integers greater than $1$,
\end{itemize}
we define $t(\Delta,\mathcal G,\mathcal D)$ as the integer $ \sum_{(d,g) \in \mathcal G} (d + g - 1) + \sum_{d \in \mathcal D} (d - 1) + \Delta - 1$. Then the next-to${}^k$ leading-log expansion
\[ H_k(z) = \kappa - \hspace*{-15pt} \sum_{\substack{\textrm{all possible }(\Delta,\mathcal G,\mathcal D)\\\textrm{ such that } t(\Delta,\mathcal G,\mathcal D) \leq k }}  \left( {a_{1,0}}^{k-\Delta}  a_{\Delta, k - t(\Delta,\mathcal G,\mathcal D)} \prod_{d \in \mathcal D} \frac{a_{d, 0}}{a_{1,0}^d} \prod_{ (d,g) \in \mathcal G} \frac {a_{d,g}}{a_{1,0}^d}
\right) {\dfrac {\partial^k F_{\Delta,\mathcal G,\mathcal D}}{\partial z^k}}\left( -a_{1,0} z \right),\]
where $F_{\Delta,\mathcal G,\mathcal D}$ is the weighted generating function of diagrams of type $(\Delta,\mathcal G,\mathcal D)$, and $\kappa$ is the constant which cancels out the right-hand side when it is evaluated at $z=0$.
\label{theo:intermsofwgf}
\end{theo}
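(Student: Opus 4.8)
The plan is to start from the explicit formula of Corollary~\ref{cor:hk} and to reorganize its sum over all decorated connected diagrams $C$ by grouping together those that share a common type $(\Delta,\mathcal G,\mathcal D)$. First I would observe that the two constraints in Corollary~\ref{cor:hk} translate cleanly into the language of types: by \eqref{eq:t1} one has $\|C\|-t_1(C)=t(\Delta,\mathcal G,\mathcal D)$, so the condition $\|C\|-t_1(C)\le k$ becomes $t(\Delta,\mathcal G,\mathcal D)\le k$, which depends only on the type and hence can be pushed onto the outer summation; the condition $\|C\|>k$ remains on the inner sum over diagrams of a fixed type. Next I would invoke Lemma~\ref{lem:t1} to replace, inside each type-class, the monomial $a_{d(t_1(C)),\,t_1(C)+k-\|C\|}A(C)$ by its closed form. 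The crucial point is that this expression factors as a type-dependent constant times $a_{1,0}^{|C|-1}$, so its only dependence on the individual diagram is through $|C|$. Pulling the type-dependent constant out of the inner sum, the contribution of a fixed type reduces to that constant times $\sum_{C}a_{1,0}^{|C|-1}w(C)(-z)^{\|C\|-k}/(\|C\|-k)!$, where the sum runs over diagrams of that type with $\|C\|>k$.

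The heart of the argument is to identify this inner sum, after rescaling, with a derivative of the weighted generating function $F_{\Delta,\mathcal G,\mathcal D}$. Since $\partial_z^k\bigl(z^n/n!\bigr)=z^{n-k}/(n-k)!$ for $n\ge k$ and vanishes otherwise, one has $\partial_z^k F_{\Delta,\mathcal G,\mathcal D}(-a_{1,0}z)=\sum_{\|C\|\ge k}w(C)(-a_{1,0}z)^{\|C\|-k}/(\|C\|-k)!$, summed over diagrams of the type. Matching this against the inner sum is a matter of bookkeeping the powers of $a_{1,0}$. On the theorem side, the rescaled argument produces a factor $a_{1,0}^{\|C\|-k}$ which combines with the prefactor carrying $a_{1,0}^{k-\Delta}$ and one factor $a_{1,0}^{-d}$ for each $d\in\mathcal D$ and each $(d,g)\in\mathcal G$; on the Corollary side, Lemma~\ref{lem:t1} furnishes $a_{1,0}^{|C|-1}$ together with one factor $a_{1,0}^{-1}$ for each $d\in\mathcal D$ and each $(d,g)\in\mathcal G$. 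Using the identity $\|C\|=|C|+\sum_{(d,g)\in\mathcal G}(d-1)+\sum_{d\in\mathcal D}(d-1)+\Delta-1$ from the proof of Lemma~\ref{lem:t1}, both total exponents equal $|C|-1-|\mathcal G|-|\mathcal D|$ (with $|\mathcal G|,|\mathcal D|$ counted with multiplicity), while the residual factors $a_{\Delta,\,k-t(\Delta,\mathcal G,\mathcal D)}$, $\prod_{\mathcal D}a_{d,0}$, $\prod_{\mathcal G}a_{d,g}$ and the signs $(-1)^{\|C\|-k}$ agree term by term. This exponent-and-sign verification is the step I expect to be the main obstacle: it is elementary but error-prone, and getting the conversion between the $a_{1,0}^{-1}$ of the Corollary and the $a_{1,0}^{-d}$ of the theorem exactly right hinges on the decoration-weighted identity for $\|C\|$.

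Finally, I would account for the discrepancy in summation ranges: the derivative $\partial_z^k F_{\Delta,\mathcal G,\mathcal D}(-a_{1,0}z)$ ranges over $\|C\|\ge k$, whereas Corollary~\ref{cor:hk} sums only over $\|C\|>k$. The difference consists exactly of the diagrams with $\|C\|=k$, whose contribution is $z$-independent. Collecting these constant terms across all types yields a single constant, which is precisely what is absorbed into $\kappa$. I would pin down $\kappa$ by evaluating the asserted identity at $z=0$, where each $\partial_z^k F_{\Delta,\mathcal G,\mathcal D}(-a_{1,0}z)$ contributes only its $\|C\|=k$ part; this determines $\kappa$ as the constant cancelling the right-hand side at $z=0$ and completes the proof.
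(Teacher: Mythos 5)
Your proposal is correct and follows essentially the same route as the paper's own proof: grouping the sum of Corollary~\ref{cor:hk} by type, factoring out the type-dependent monomial via Lemma~\ref{lem:t1}, rescaling $z \mapsto -a_{1,0}z$ to recognize the $k$th derivative of $F_{\Delta,\mathcal G,\mathcal D}$, and absorbing the missing $\|C\|=k$ constant terms into $\kappa$. Your explicit verification that both exponents of $a_{1,0}$ equal $|C|-1-|\mathcal G|-|\mathcal D|$ is just a more detailed spelling-out of the paper's observation that $|C|+k-\|C\|-1$ is constant on each type class.
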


Remark that each summand in $t(\Delta,\mathcal G,\mathcal D)$ is positive, so the sum indexed by types such that $t(\Delta,\mathcal G,\mathcal D) \leq k$ is always finite. We have thus expressed the next-to${}^k$ leading-log expansion in terms of a finite sum of weighted generating functions (that we will be able to compute).

\begin{proof} In the sum of the formula of Corollary~\ref{cor:hk}, we gather diagrams according to their type. The types are triplets $(\Delta,\mathcal G,\mathcal D)$ as defined in the theorem such that $t(\Delta,\mathcal G,\mathcal D) \leq k$. Using Lemma~\ref{lem:t1}, we further factorize since $a_{d(t_1(C)), t_1(C)+k-\|C\|}  A(C)$ can be expressed in terms of the type. We obtain at this point the formula
\[H_k(z) =  - \hspace*{-25pt} \sum_{\substack{\textrm{all possible }(\Delta,\mathcal G,\mathcal D)\\\textrm{such that } t(\Delta,\mathcal G,\mathcal D) \leq k }}  \hspace*{-15pt} a_{\Delta, k - t(\Delta,\mathcal G,\mathcal D)} \prod_{d \in \mathcal D} \frac{a_{d, 0}}{a_{1,0}} \prod_{ (d,g) \in \mathcal G} \frac {a_{d,g}}{a_{1,0}} 
\sum_{\substack{\textrm{decorated connected}\\\textrm{diagram } C\textrm{ with type}\\(\Delta,\mathcal G,\mathcal D)\textrm{ s.t. } \|C\| > k }} \hspace*{-15pt} {a_{1,0}}^{|C|-1} w(C) \frac{(-z)^{\|C\|-k}}{{(\|C\|-k)!}}.\]
With the substitution $y = - a_{1,0} z$, the sum at the far right can be 
rewritten as 
\begin{equation}
\sum_{\substack{\textrm{decorated connected}\\\textrm{diagram } C\textrm{ with type}\\(\Delta,\mathcal G,\mathcal D)\textrm{ s.t. } \|C\| > k }} a_{1,0}^{|C| + k - \|C\| - 1} w(C) \frac{y^{\|C\|-k}}{(\|C\|-k)!}.
\label{eq:subsum}
\end{equation}
Note that $|C| + k - \|C\| - 1$ is a constant over all diagrams $C$ with type $(\Delta,\mathcal G,\mathcal D)$. Indeed, using the fact that the difference between $\|C\|$ and $|C|$ is computed by summing the decorations minus 1 over all chords of $C$, this number is always equal to $k - \Delta - \sum_{d \in \mathcal D} (d-1) - \sum_{(d,g) \in \mathcal G} (d-1)$.

Moreover, up to a constant (which is a power of $a_{1,0}$), we recognize in \eqref{eq:subsum} the $k$th derivative of  $F_{\Delta,\mathcal G,\mathcal D}$, except that the constant term of \eqref{eq:subsum} is always $0$. 
In order to fix the discrepancy of the constant term, we add a constant $\kappa$ which can be computed afterwards. One then recovers the formula stated by the theorem.
\end{proof}

\begin{ex} Let us consider $k=0$. There is only one triplet $(\Delta,\mathcal G,\mathcal D)$ such that $t(\Delta,\mathcal G,\mathcal D)=0$ (every summand must be null): it is $(1,\varnothing,\varnothing)$. So
\[H_0(z) = -F_{1,\varnothing,\varnothing}(-a_{1,0}\,z)\]
For $k=1$, there are four possible triplets: $(1,\varnothing,\varnothing)$, $(2,\varnothing,\varnothing)$, $(1,\{(1,1)\},\varnothing)$ and $(1,\varnothing,\{2\})$.
Therefore
\begin{multline*}H_1(z) =  a_{1,1} - a_{1,1} \dfrac {\partial F_{1,\varnothing,\varnothing}}{\partial z}(-a_{1,0}\,z) -  \frac {a_{2,0}} {a_{1,0}}   \dfrac {\partial F_{2,\varnothing,\varnothing}} {\partial z}    (-a_{1,0}\,z)\\ - \frac {a_{1,1}} {a_{1,0}}  \dfrac {\partial F_{1,  \{(1,1)\} ,\varnothing}}  {\partial z}   (-a_{1,0}\,z) - \frac {a_{2,0}} {a_{1,0}^2}  \dfrac {\partial F_{1,\varnothing,\{2\}}}   {\partial z}(-a_{1,0}\,z) .
\end{multline*}
Note the presence of $a_{1,1}$, which is $\kappa$ in the proposition, coming from the fact that the only constant terms in the right-hand side are $\kappa$ and $-a_{1,1}$, the constant term of $ -a_{1,1} \dfrac {\partial F_{1,\varnothing,\varnothing}}{\partial z}(-a_{1,0}\,z)$. Indeed, the constant terms of the other generating functions are null since the families counted by these generating functions do not include a diagram of size $1$.
\label{ex:h0,h1}
\end{ex}

\section{Combinatorial interpretation of the weight of a diagram}

The weighted generating function is not fully satisfying because of the dependence on the weight $w(C)$ which is not so easy to work with. In this section, we explain how to improve our model of diagrams to naturally include the weight coefficient. 

We distinguish the cases $s \geq 2$ and $s = 1$ for clarity purposes: some technical subtleties appear whenever $s=1$.  Here $s$ is again the insertion parameter from the Dyson-Schwinger equation and also appears in the definition of the weight.

\subsection{  Case $s \geq 2$  }

Our improved model of diagrams can be defined as follows.

\begin{defi}[$\omega_s$-marked diagram] For $s \geq 2$, a  diagram $C$ is \emph{$\omega_s$-marked} if it is decorated, connected and:
\begin{itemize}
\item its intervals may contain one or several crosses, named \emph{marks}, horizontally arranged;
\item for each chord $c$ of $C$, the intervals covered by $c$ (cf. Definition~\ref{def:covering}) contain $d(c)s-2$ marks in total.
\end{itemize}
\label{def:marked}
\end{defi}

An $\omega_2$-marked diagram is shown in \Cref{marked}. The notion of $\omega_s$-marked diagram obviously depends on the integer $s$. 

\fig{[scale=1.7]{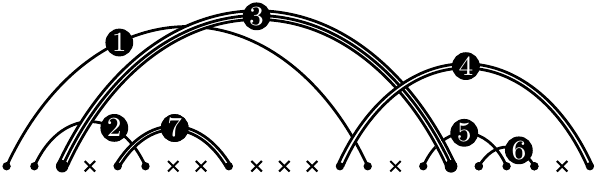}}{$\omega_2$-marked diagram}{marked}

\begin{proposition} Let $\mathcal A$ be a set of decorated connected diagrams. Then the weighted generating function of $\mathcal A$ is also the exponential generating function of $\omega_s$-marked diagrams of $\mathcal A$, i.e.
\[\sum_{C \in \mathcal A} w(C) \frac{z^{\|C\|}}{\|C\|!} = \sum_{n \geq 0} a_n \frac{z^n}{n!},\]
where $a_n$ is the number of marked diagrams of $\mathcal A$ of size $n$ (that is $\|\mathcal{A}\|=n$).
\label{prop:wgf}
\end{proposition}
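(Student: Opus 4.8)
The plan is to show that the weight $w(C)$ literally counts the number of ways to turn a given decorated connected diagram $C$ into an $\omega_s$-marked diagram, and then to read off the generating-function identity by grouping the marked diagrams according to their underlying decorated diagram.

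First I would fix a decorated connected diagram $C \in \mathcal{A}$ and count the markings that produce an $\omega_s$-marked diagram whose underlying diagram is $C$. By Definition~\ref{def:covering} the intervals of $C$ are partitioned among its chords, so each interval is covered by exactly one chord; hence the marking conditions imposed by distinct chords concern disjoint sets of intervals and may be made independently. For a single chord $c$, Definition~\ref{def:marked} requires exactly $d(c)s - 2$ marks to be distributed among the intervals covered by $c$. The number of such intervals is $\omega(c) + 1$ by Definition~\ref{def:covering}, and since $s \geq 2$ and $d(c) \geq 1$ we have $d(c)s - 2 \geq 0$, so this count is well-defined. This nonnegativity is exactly what fails when $s = 1$ with $d(c)=1$, which is what motivates the separate treatment of that case later.

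Next I would carry out the elementary enumeration. Because the marks are identical crosses (their horizontal arrangement inside an interval records no further data), a marking of the intervals covered by $c$ amounts to a choice of how many marks land in each of the $\omega(c)+1$ intervals, subject to the total being $d(c)s - 2$. By a stars-and-bars argument the number of such choices is
\[\binom{(d(c)s-2)+(\omega(c)+1)-1}{(\omega(c)+1)-1} = \binom{d(c)s+\omega(c)-2}{\omega(c)}.\]
Taking the product over all chords $c$ of $C$, and using the independence noted above, the number of $\omega_s$-markings of $C$ is exactly $\prod_{c=1}^{|C|}\binom{d(c)s+\omega(c)-2}{\omega(c)} = w(C)$ by \eqref{eq:defw}.

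Finally I would assemble the two generating functions. Every $\omega_s$-marked diagram in $\mathcal{A}$ sits over a unique underlying decorated diagram $C \in \mathcal{A}$, and marking leaves the size unchanged, so the number $a_n$ of marked diagrams of $\mathcal{A}$ of size $n$ equals $\sum_{C \in \mathcal{A},\, \|C\| = n} w(C)$. Substituting this into $\sum_{n\geq 0} a_n z^n/n!$ and regrouping the terms by $C$ yields $\sum_{C \in \mathcal{A}} w(C) z^{\|C\|}/\|C\|!$, which is the claimed identity. The only genuine subtlety — and thus the main thing to get right — is the combinatorial reading of the weight: one must verify that the covering really partitions the intervals (so that the per-chord choices are independent) and that the stars-and-bars count reproduces precisely the binomial factor $\binom{d(c)s+\omega(c)-2}{\omega(c)}$ of Definition~\ref{def:weight}; once these are in place, everything else is bookkeeping.
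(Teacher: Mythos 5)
Your proof is correct and follows essentially the same route as the paper: the paper also interprets each binomial factor $\binom{d(c)s+\omega(c)-2}{\omega(c)}$ as the multiset coefficient counting placements of $d(c)s-2$ marks into the $\omega(c)+1$ intervals covered by $c$ (your stars-and-bars computation is exactly this identity), and concludes that $w(C)$ counts the $\omega_s$-markings of $C$. Your additional remarks on the independence of the per-chord choices and the final regrouping by underlying diagram are steps the paper leaves implicit, but there is no substantive difference.
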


\begin{proof}We have the equality
\[w(C) = \prod_{c=1}^{|C|} {d(c)s + \omega(c)- 2  \choose \omega(c) } = \prod_{c=1}^{|C|} {\omega(c)+1 + d(c)s - 2  - 1 \choose d(c)s - 2 } = \prod_{c=1}^{|C|} \multiset {\omega(c)+1} {d(c)s-2},\]
where $\multiset n k$ is the multiset coefficient, i.e. the number of multisets of size $k$ with elements taken from a set of size $n$. In other words, for $s=2$, the integer $w(C)$ counts the number of ways to insert $d(c)s-2$ marks into the intervals covered by $c$ (there are $\omega(c) + 1$ such intervals), for every chord $c$ of $C$. Therefore,  $w(C)$ counts the number of ways for $C$ to become $\omega_s$-marked. This proves the proposition.
\end{proof}

To have a recursive description of our $\omega_s$-marked diagrams, we need to extend the definition of an interval:
\begin{defi}[gap] A \textit{gap} in an $\omega_s$-marked diagram is a space between two dots/two marks/a mark and a dot.
\end{defi}

For example, the diagram of \Cref{marked} has 21 gaps. It turns out that the number of gaps is constant when the size of the $\omega_s$-marked diagram is fixed.

\begin{proposition} Given $s \geq 2$, every $\omega_s$-marked diagram $C$ has $(sn-1)$ gaps, where $n=\|C\|$.
\label{prop:nb_gaps}
\end{proposition}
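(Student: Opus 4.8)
The plan is to count the total number of point-objects (dots and marks) sitting on the horizontal line of the linear representation, and then to observe that a gap is by definition nothing but a space between two consecutive such objects; hence the number of gaps is one less than the number of objects. So the whole proof reduces to a bookkeeping of dots plus marks. First I would count the dots: a diagram with $|C|$ chords has $2|C|$ endpoints, so there are $2|C|$ dots, creating $2|C|-1$ intervals between consecutive dots.

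Next I would count the marks. By Definition~\ref{def:marked}, the intervals covered by a chord $c$ carry exactly $d(c)s-2$ marks (which is $\geq 0$ since $s\geq 2$ and $d(c)\geq 1$). The covering of Definition~\ref{def:covering} \emph{partitions} all intervals among the chords, so each mark lies in an interval assigned to a unique chord; summing over chords therefore counts every mark exactly once. This gives a total of
\[
\sum_{c=1}^{|C|}\bigl(d(c)s-2\bigr) = s\sum_{c=1}^{|C|}d(c) - 2|C| = sn - 2|C|
\]
marks, using $\sum_{c} d(c) = \|C\| = n$.

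Consequently the total number of objects on the line is $2|C| + (sn - 2|C|) = sn$, the $2|C|$ cancelling, and since a gap is precisely a space between two consecutive objects there are $sn-1$ gaps, independent of $|C|$. The only point that genuinely needs care—and the sole potential pitfall—is the claim that the covering partitions \emph{all} $2|C|-1$ intervals, so that no mark is double-counted and none is missed; this is exactly what Definition~\ref{def:covering} asserts, and it can be sanity-checked through the identity $\sum_{c}\omega(c) = |C|-1$ valid for connected diagrams (as seen in the example following Definition~\ref{def:covering}). Everything else is the elementary cancellation above.
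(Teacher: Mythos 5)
Your proof is correct and follows essentially the same route as the paper's: both count the marks as $\sum_{c}(d(c)s-2)=sn-2|C|$ and combine this with the $2|C|$ dots to conclude there are $sn-1$ gaps (the paper phrases the last step as ``each mark increases the gap count by one,'' which is the same bookkeeping as your ``objects minus one'').
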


\begin{proof}
 Let $\gamma$ be the number of gaps, and $m$ be the number of marks. Without any mark, a diagram with $|C|$ chords has 
$2|C|-1$ intervals, hence $2|C|-1$ gaps. Moreover, adding one mark to a marked diagram increases the number of gaps by one. Thus, we have the equality
\begin{equation}
\gamma = 2|C|-1 + m. \label{eq:nbgaps}
\end{equation} 
Furthermore, we have
\[ m = \sum_{c = 1}^{|C|} (s d(c) - 2) = - 2|C| + s \sum_{c = 1}^{|C|} d(c) = - 2|C| + sn.   \]
We deduce $\gamma = sn - 1$.
\end{proof}

%

The most natural way to build a larger diagram from an existing $\omega_s$-marked chord diagram is to add a root chord. This operation is not unique. For the rest of this subsection, we would like to study how many ways there are to insert a new root chord.
We first introduce a useful notation.

\begin{defi}[$\Gamma_j \mathcal A$]
Let $\mathcal A$ be a set of $\omega_s$-marked diagrams and $j$ be a positive integer. We denote by $\Gamma_j \mathcal A$ the set of $\omega_s$-marked diagrams $C$ such that the root chord has decoration $j$, and such that the deletion of the root chord induces one connected component, which must belong to $\mathcal A$. 
\label{def:Gammaj}
\end{defi}

The next lemma, which will be crucial for the rest of the document, claims that there is a constant number of ways to insert a new root chord into an $\omega_s$-marked diagram, when the size is fixed.  Recall that the size of a decorated chord diagram $C$ is $\|C\|$, that is, the sum of the decorations.

\begin{lem} 
For every set $\mathcal A$ of $\omega_s$-marked diagrams of size $n$, and for every positive integer $j$, the function ``\textit{deleting the root chord}'' is a $(sn-1)$-to-one function from $\Gamma_j \mathcal A$ to $\mathcal A$.
\label{lem:deletingroot}
\end{lem}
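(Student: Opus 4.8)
The plan is to fix a single diagram $D \in \mathcal A$ and count its preimages under root deletion, showing this fiber always has exactly $sn-1$ elements. Since by Proposition~\ref{prop:nb_gaps} a size-$n$ diagram has exactly $sn-1$ gaps, the natural target is a bijection between the preimages of $D$ and the gaps of $D$. Concretely, I would describe the inverse operation ``insert a decoration-$j$ root chord'' and show that the only free choice it involves is the gap into which the new right endpoint is placed.

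The key structural observation, which I would establish first, is that in any connected diagram the root chord has covering number $0$: it covers exactly the single interval immediately to the right of its left endpoint. Indeed, that leading interval is spanned only by the root, whereas every other interval lying below the root is spanned by some \emph{non-root} chord---this is exactly where connectivity is essential---which, coming later in the intersection order, overwrites the root's label in the procedure of Definition~\ref{def:covering}. The one case to rule out is the root reaching the rightmost dot, but then it crosses nothing and the diagram is disconnected unless it is the single chord. Consequently, in any $C \in \Gamma_j\mathcal A$ the root must carry all of its $d(\text{root})s-2 = js-2$ marks in that one interval, so the root's mark placement contributes no freedom (consistent with the weight factor $\binom{js-2}{0}=1$).

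With this in hand, I would define the reconstruction map: given $D$ and a gap $G$ of $D$, form $C$ by placing the new left endpoint at the far left, placing all $js-2$ root marks immediately to its right, inserting the new right endpoint at $G$, and leaving every object of $D$ untouched. I would then verify three points. First, connectivity: $G$ lies within some dot-interval $(e_p,e_{p+1})$ of $D$, which, being internal in the connected diagram $D$, is spanned by a chord, so the new root crosses that chord and $C$ is connected with the deletion of its root equal to $D$. Second, the $\omega_s$-marked condition: the root covers only its leading interval and holds exactly its $js-2$ marks; the chord that owned $(e_p,e_{p+1})$ in $D$ now owns both halves created by the inserted right endpoint, so its mark count is unchanged; and every other chord's covered intervals and marks are identical to those in $D$. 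Third, deleting the root of this $C$ (removing the two endpoints and the marks in the root's interval, then merging the split interval) returns exactly $D$.

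Finally I would argue this is a genuine bijection onto the fiber over $D$: distinct gaps give distinct $C$ (the right endpoint sits in a different place), and conversely every preimage $C$ has $\omega(\text{root})=0$, so its root marks are forced into the leading interval and $C$ is recovered from $D$ together with the unique gap holding its right endpoint. Hence the fiber has exactly $sn-1$ elements and root deletion is $(sn-1)$-to-one. I expect the main obstacle to be the covering-number computation for the root---in particular obtaining a \emph{non-root} overwriting chord for each non-leading interval, rather than merely some spanning chord---together with checking that inserting the new right endpoint interacts correctly with the marks of the interval it splits, so that no mark is mis-assigned and merging on deletion reproduces $D$ verbatim; the connectivity and counting steps are then routine.
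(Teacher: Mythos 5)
Your proof is correct and is essentially the paper's own argument for the case $s \geq 2$: fix $D \in \mathcal A$, put its fiber in bijection with its gaps by inserting the right endpoint of a decoration-$j$ root chord into a chosen gap and forcing all $js-2$ root marks into the leading interval, using connectivity to show the root covers only that interval so that the marked condition and the inverse map work out. If anything, you spell out two points more explicitly than the paper does, namely why every non-leading interval under the root is overwritten by a later chord, and why every preimage actually arises from a gap (surjectivity of the correspondence). The one caveat is that your argument silently assumes $s \geq 2$, while the lemma is also asserted for $s = 1$, where this approach breaks down: Proposition~\ref{prop:nb_gaps} fails there (the one-chord diagram of size $1$ has one gap, not $sn-1 = 0$), a decoration-$1$ root would need $js - 2 = -1$ marks, and inserting a right endpoint into a gap lying under a decoration-$1$ chord raises that chord's covering number, violating the $\omega_1$-marked condition. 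This is exactly why the paper postpones $s=1$ to a separate proof, which counts only the gaps lying in intervals covered by chords of decoration $\geq 2$ and shows their number is $\|C\| - 1$.
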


This lemma also holds for $s=1$ (the proof in that case is postponed to the next subsection)

\fig{[scale=1.7]{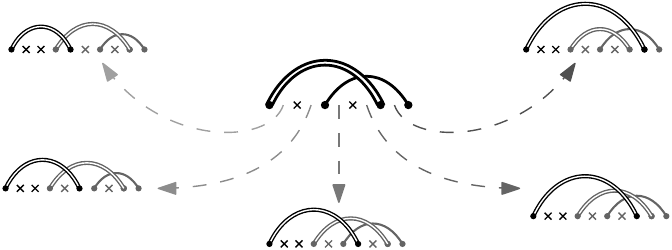}}{Given an $\omega_2$-marked diagram $C$ of size $n=3$, there are five $\omega_2$-marked diagrams such that the root chord has decoration $j=2$, and such that $C$ is recovered by deleting the root chord. Five is also the number of gaps of $C$.}{rootinsertion}

\begin{proof} (Case $s \geq 2$)

We prove that every diagram $C$ from $\mathcal A$ has as many preimages in $\Gamma_j \mathcal A$ under the function ``\textit{deleting the root chord}'' as gaps. Given a gap $g$, we construct a new $\omega_s$-marked diagram by inserting into $g$ the right endpoint of a new root chord of decoration $j$. We then put $js-2$ marks into the leftmost interval. Let us denote by $C_g$ the resulting diagram. Examples of $C_g$ are shown in Figure~\ref{rootinsertion}.

Let us now show that the marks put into $C_g$ make it $\omega_s$-marked. 

First, notice that the only interval covered by the root chord of $C_g$ is the leftmost one. Indeed, the root chord of $C_g$ is chord $1$ for the intersection order, so when we compute the covering of the chords of this new diagram (see Definition~\ref{def:covering}), all labels $1$ which are not in this first interval will be replaced by other labels. Since we placed $js-2$ marks into the only interval covered by the root chord,  $C_g$ is well $\omega_s$-marked at least as regards the root chord.

As for every non-root chord of $C_g$, the number of marks contained in the covered intervals does not change. Indeed, adding a root chord may add a new interval into the covering of a chord, but this newly formed interval will inherit the label of this chord, and some marks that were in the split interval. 

The diagram $C_g$ is therefore $\omega_s$-marked. By construction, it is also in $\Gamma_j \mathcal A$.
Moreover, for two gaps $g \neq g'$, the diagrams $C_g$ and $C_{g'}$ cannot be equal since the total number of marks and dots below the root chord of $C_g$ strictly increases with respect to the position of $g$. 

Thus there are as many gaps as preimages of $C$ under the function which deletes the root chord. Since by Proposition~\ref{prop:nb_gaps} there are $(sn-1)$ gaps in every diagram of $\mathcal A$, the lemma is proved.
\end{proof}

\subsection{The case $s=1$}

The case $s=1$ is special because the number of marks ``$d(c)s-2$'' can be negative when $d(c) = 1$. This explains the following alternative definition.

\begin{defi}[$\omega_1$-marked diagram] A  diagram $C$ is \emph{$\omega_1$-marked} if it is decorated, connected, contains marks in its intervals and:
\begin{itemize}
\item for each chord $c$ with decoration $\geq 2$, the intervals covered by $c$ exactly contain $d(c)-2$ marks in total.
\item for each chord $c$ with decoration $1$, the covering number of $c$ must be $0$, and the only interval covered by $c$ does not contain any mark.
\end{itemize}
\label{def:marked}
\end{defi}

Proposition~\ref{prop:wgf} also holds for $s=1$ with this notion of $\omega_1$-marked diagram.

\begin{proof}[Proof of Proposition~\ref{prop:wgf} for $s=1$] Remember that
\[w(C) = \prod_{c=1}^{|C|} {d(c) + \omega(c)- 2  \choose \omega(c) }.\]
For $d(c)>1$ the binomial is still a multiset coefficient, which counts the number of ways to insert $d(c)-2$ marks into the intervals covered by $c$. As for every chord $c$ such that $d(c)=1$, observe that
\[{d(c) + \omega(c)- 2  \choose \omega(c) } = \left\{ \begin{array}{ll}{-1 \choose 0} = 1 & \textrm{if }\omega(c)=0 \\ {\omega(c)-1 \choose \omega(c)} = 0 & \textrm{if }\omega(c)>0
\end{array}\right..\]
Thus, the weight $w(C)$ is non-zero if and only if every chord with decoration $1$ covers only one interval.
\end{proof}

As we said above, Lemma \ref{lem:deletingroot} holds also for $s=1$, but it is no longer the case for Proposition~\ref{prop:nb_gaps}. Indeed, $\|C\|-1$ is not equal to the number of gaps of a diagram $C$ (take for example the one-chord diagram which has size $1$ and $1$ gap). Nonetheless $\|C\|-1$ is still the number of preimages of any diagram $C$ under the function ``\textit{deleting the root chord}''.

\begin{proof}[Proof of  Lemma \ref{lem:deletingroot} for $s=1$]

Let $C$ be an $\omega_1$-marked diagram of size $n$, and let us count the number of its preimages.

Let $D$ be an $\omega_1$-marked diagram such that the removal of the root chord gives $C$. Let $g$ be the gap of $C$ where the right  endpoint of the root chord of $D$ is.

For shortness reasons, let us say that a gap is \textit{of decoration $d$} if it is in an interval covered by a chord of decoration $d$.

Gap $g$ cannot be of decoration $1$.  Indeed, the covering number of the chord that covers $g$ is at least equal to $1$, since the interval in which the root chord is inserted is split in two but $D$ is also $\omega_1$-marked. So necessarily $g$ is of decoration $\geq 2$.
Conversely, if $g$ is a gap of decoration $\geq 2$, one can construct a preimage $D$ by inserting a root chord whose right endpoint lies into $g$.

So the number of preimages is equal to the number of gaps of $C$ of decoration $\geq 2$. Clearly we have
\[
(\textrm{number of gaps of deco.} \geq 2)  = \textrm{number of gaps} - (\textrm{number of gaps of deco. }  1).
\]
But the number of gaps is equal to $2 |C| - 1 + \textrm{number of marks}$ (see \eqref{eq:nbgaps}), and the number of gaps of decoration $1$ is exactly equal to the number of \textit{chords} of decoration $1$ (because such chords are of covering number $0$ and contain no marks), hence:
\[
(\textrm{number of gaps of deco.} \geq 2)  = 2 |C| - 1 + \textrm{number of marks} - (\textrm{number of chords of deco. }  1)
\]
By definition of $\omega_1$-marked diagrams, the number of marks can be deduced via the following chain of equalities
\begin{align*}
\textrm{number of marks} & =  \sum_{\substack{ \textrm{chord }c \\ \textrm{of decoration }\geq 2}} d(c) - 2 \\ \ & =   \textrm{number of chords of deco. }  1 + \sum_{\textrm{chord }c}d(c) - 2 \\
\ & =  \textrm{number of chords of deco. }  1 + \|C\| - 2 |C|.
\end{align*}
By combining all the equalities, we deduce that the number of preimages is indeed $\|C\| - 1$.
\end{proof}

\section{Computation of the leading-log expansions}

All the mathematical analysis is based on Theorem~\ref{theo:intermsofwgf}, which expresses the N${}^k$LL expansions in terms of the weighted generating functions of diagrams with a given type. In this section, we compute the weighted generating functions involved in the cases where $k$ is small. 
Even if we have automated the computation of the weighted generating function of any type, we have chosen to begin by some simple examples, which we are going to generalize as we go along.

\subsection{Some examples for $s=2$ (Yukawa theory) }

The $s=2$ case corresponds to the fermion propagator in Yukawa theory as well as other propagators with the same combinatorics of their diagrammatics, as discussed in the background section.  

\subsubsection{Leading-log expansion}

Thanks to Example~\ref{ex:h0,h1}, we know that $H_0(z) = -F_{1,\varnothing,\varnothing}(-a_{1,0}\,z)$, where $F_{1,\varnothing,\varnothing}$ is the generating function of connected diagrams with only one terminal chord and no chord with decoration $\geq 2$.

\begin{proposition} The weighted generating function of diagrams with only one terminal chord and no chord with decoration $\geq 2$ is equal to
\[F_{1,\varnothing,\varnothing}(z)=1-\sqrt{1-2z}.\]
Consequently, the leading-log expansion for $s=2$ is $H_0(z) = \sqrt{1+2\,a_{1,0}\,z}-1$.
\label{prop:lls2}
\end{proposition}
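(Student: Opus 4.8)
The plan is to compute the weighted generating function $F_{1,\varnothing,\varnothing}(z)$ by setting up a functional equation via the root-deletion map, and then solve that equation. The family counted by $F_{1,\varnothing,\varnothing}$ consists of decorated connected diagrams of type $(1,\varnothing,\varnothing)$: that is, exactly one terminal chord (whose decoration is $\Delta=1$), an empty gap-multiset $\mathcal G$ (forced, since there are no further terminal chords), and no chords of decoration $\geq 2$ (empty $\mathcal D$). Thus every chord has decoration $1$, and the diagram is a connected diagram with a \emph{unique} terminal chord. Since every chord carries decoration $1$, the size $\|C\|$ equals the number of chords $|C|$, so the weighted generating function is $\sum_C w(C) z^{|C|}/|C|!$ over connected diagrams with a single terminal chord.

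Because we are in the case $s=2$, I will use Proposition~\ref{prop:wgf} to reinterpret $w(C)$ combinatorially: $F_{1,\varnothing,\varnothing}(z)$ is the exponential generating function of $\omega_2$-marked diagrams in this family. The key structural tool is Lemma~\ref{lem:deletingroot}: deleting the root chord is an $(2n-1)$-to-one map onto marked diagrams of size $n$, where $n=\|C\|$. The essential combinatorial observation I would verify is that, for diagrams with a single terminal chord, deleting the root chord yields again a diagram with a single terminal chord (the intersection order guarantees the unique terminal chord is the last chord, and it survives root-deletion, remaining the unique terminal chord of the connected remainder). Conversely, adding a root chord of decoration $1$ to such a diagram keeps the family closed, provided the newly added root chord is itself non-terminal so the terminal chord stays unique. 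Letting $\mathcal A_n$ be the set of size-$n$ marked diagrams in the family and $a_n=|\mathcal A_n|$, the root-deletion bijection gives the recurrence $a_{n+1} = (2n-1)a_n$ for $n\geq 1$, subject to the constraint that we only insert a root chord in a way that keeps the terminal chord unique.

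Translating the recurrence $a_{n+1}=(2n-1)a_n$ (with $a_1=1$) into the exponential generating function $y(z)=\sum_{n\geq 1} a_n z^n/n!$ produces a first-order ODE. Concretely, $\sum_n a_{n+1} z^n/n! = y'(z)$ while $\sum_n (2n-1)a_n z^n/n!$ relates to $2zy'(z)$ and $y(z)$; matching these yields a separable differential equation whose solution, with the initial condition $y(0)=0$ (the empty family contributes nothing, and the single-chord diagram forces the linear coefficient to be $1$), is $1-\sqrt{1-2z}$. One checks directly that $F=1-\sqrt{1-2z}$ satisfies $F(0)=0$, $F'(0)=1$, and reproduces $a_n=(2n-3)!!$. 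The final claim about $H_0$ is then immediate from Example~\ref{ex:h0,h1}: substituting $z\mapsto -a_{1,0}z$ into $F_{1,\varnothing,\varnothing}$ gives $H_0(z)=-F_{1,\varnothing,\varnothing}(-a_{1,0}z) = -(1-\sqrt{1+2a_{1,0}z}) = \sqrt{1+2a_{1,0}z}-1$.

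The main obstacle I anticipate is rigorously justifying that the root-deletion map restricts correctly to this subfamily, i.e.\ controlling the "single terminal chord" condition under both deletion and insertion. Lemma~\ref{lem:deletingroot} counts \emph{all} preimages in $\Gamma_1 \mathcal A$, so I must argue that inserting a decoration-$1$ root chord into a single-terminal-chord diagram always produces another single-terminal-chord diagram (the new root cannot become terminal because it crosses the old root's structure, and it cannot create a second terminal chord), and that no preimages are lost. Equivalently, I should confirm that the family of single-terminal-chord diagrams is exactly the image of $\Gamma_1$ applied to itself, so that the $(2n-1)$-to-one count transfers verbatim to the recurrence $a_{n+1}=(2n-1)a_n$. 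Once this closure is established, the ODE and its solution are routine.
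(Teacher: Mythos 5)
Your overall route is the same as the paper's: interpret $F_{1,\varnothing,\varnothing}$ combinatorially via Proposition~\ref{prop:wgf}, apply Lemma~\ref{lem:deletingroot} to get the recurrence $f_{n+1}=(2n-1)f_n$, and sum (the paper computes $f_n=(2n-3)!!$ directly rather than through an ODE, but that difference is cosmetic, as is the final substitution giving $H_0$). However, the step you defer as ``the main obstacle'' --- that the single-terminal-chord family is closed under root deletion and insertion, i.e.\ $\mathcal F_{n+1}=\Gamma_1\mathcal F_n$ --- is precisely the content of the paper's proof, and as written your proposal does not supply it; moreover your parenthetical hints point at the wrong reasons. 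For the deletion direction, connectedness of the remainder is \emph{not} automatic (deleting the root of a connected diagram can disconnect it); it is forced by uniqueness of the terminal chord. The argument: if removing the root of $C\in\mathcal F_{n+1}$ leaves components $C_1,\dots,C_r$, then the last chord of each $C_i$ in intersection order is terminal in $C_i$, and it remains terminal in $C$ because the only chord of $C$ outside $C_i$ that could cross it is the root, whose left endpoint $1$ is smaller than every other left endpoint, so it cannot violate Definition~\ref{def:terminal}. Thus each component contributes at least one terminal chord of $C$, and uniqueness gives $r=1$; the same invariance shows $C_1$ has exactly one terminal chord, so $C_1\in\mathcal F_n$.

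For the insertion direction, your claim that ``the new root cannot become terminal because it crosses the old root's structure'' has the logic backwards: a chord is terminal when every chord crossing it has a \emph{smaller} left endpoint, and the root's left endpoint is $1$, the minimum possible; hence the root is terminal if and only if nothing crosses it at all, which is impossible in a connected diagram with at least two chords. So for $D\in\Gamma_1\mathcal F_n$ (which is connected, since by Definition~\ref{def:Gammaj} its elements are $\omega_s$-marked diagrams and those are connected by definition), the root is never terminal, and the terminal chords of $D$ are exactly those of the diagram obtained by deleting the root, i.e.\ exactly one. With both inclusions established, $\mathcal F_{n+1}=\Gamma_1\mathcal F_n$, and Lemma~\ref{lem:deletingroot} with $s=2$ (so $sn-1=2n-1$ preimages) yields your recurrence; from there the ODE solution, the initial condition $f_1=1$, and the substitution $H_0(z)=-F_{1,\varnothing,\varnothing}(-a_{1,0}z)$ from Example~\ref{ex:h0,h1} are all correct.
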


\begin{proof} 
 We wish to enumerate the connected diagrams with only one terminal chord (no decorations and no marks).
The study of this has already been done in \cite{CYchord}, which we explain here in terms of Lemma~\ref{lem:deletingroot}. 

Let $\mathcal F_n$ be the set of connected diagrams of size $n$ with only one terminal chord. Consider $C$ a diagram of $\mathcal F_{n+1}$ with $n \geq 1$. Remove the root chord of $C$: there results one or several connected components $C_1, \dots, C_r$. But each of these components must contain at least one terminal chord (the last chord for the intersection order). So $r$ must be equal to $1$, and $C_1$ must have $n$ chords. In other words, $\mathcal F_{n+1}$ is exactly $\Gamma_1 \mathcal F_{n}$ (recall  Definition~\ref{def:Gammaj}).

We can then use Lemma~\ref{lem:deletingroot} to prove that, if $f_n$ denotes the number of connected diagrams of $n$ chords with only one terminal chord, then $f_{n+1} = (2n-1) f_n$. 
%
%
%
%

Therefore, since $f_1=1$, we have $f_n  = (2n-3) \times (2n-5) \times \dots \times 3 \times 1 = (2n-3)!!$, which implies $F_{1,\varnothing,\varnothing}(z)= \sum_{n \geq 1} \frac{(2n-3)!!}{n!} z^n =  1-\sqrt{1-2z}$.
\end{proof}

\subsubsection{Next-to leading-log expansion}

As for the next-to leading log expansions, we saw that $H_1(z)$ is a linear combination of the derivatives of
$F_{1,\varnothing,\varnothing}$, $F_{2,\varnothing,\varnothing}$, $F_{1,  \{(1,1)\} ,\varnothing}$ and $ F_{1,\varnothing,\{2\}}$.

\begin{proposition} We have
\[F_{2,\varnothing,\varnothing}(z)=1-\sqrt {1-2\,z}-z,\]
\[F_{1,  \{(1,1)\} ,\varnothing}(z) =  F_{1,\varnothing,\{2\}}(z) = 
\frac 1 2 \ln  \left( 1-2\,z \right) \sqrt {1-2\,z
}+z
\]
Therefore, by Theorem~\ref{theo:intermsofwgf}, the next-to leading-log expansion for $s=2$ is
\[H_1(z) = {\frac {- \left( a_{{2,0}}+a_{{1,1}}a_{{1,0}} \right)  \,\ln   \left( 1+2\,za_{{1,0}} \right) 
 }{2 a_{{1,0}}\, \sqrt {1+2\,za
_{{1,0}}} }}\]
\label{prop:ntlle}
\end{proposition}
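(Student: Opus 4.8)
The plan is to compute each of the three weighted generating functions $F_{2,\varnothing,\varnothing}$, $F_{1,\{(1,1)\},\varnothing}$ and $F_{1,\varnothing,\{2\}}$ directly, by describing the families of decorated connected diagrams they enumerate and relating each family to the baseline family $\mathcal F$ counted by $F_{1,\varnothing,\varnothing}(z)=1-\sqrt{1-2z}$ (Proposition~\ref{prop:lls2}). Throughout I would work with $\omega_2$-marked diagrams so that the weight is automatically encoded as the number of marks (Proposition~\ref{prop:wgf}), and use Lemma~\ref{lem:deletingroot}, which says that deleting the root chord is a $(2n-1)$-to-one map on size-$n$ diagrams. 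The key structural observation, already used in Proposition~\ref{prop:lls2}, is that removing the root chord from a connected diagram with a single terminal chord forces exactly one connected component, since every component must contain at least one terminal chord.

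First I would handle $F_{2,\varnothing,\varnothing}$. Here $\mathcal G=\mathcal D=\varnothing$ and $\Delta=2$, so these are connected diagrams with a single terminal chord, that terminal chord carrying decoration $2$ and all other chords decoration $1$. Every diagram of size $\geq 2$ in this family is $\Gamma_1$ applied to a smaller such family, except that the base case is now the single chord of decoration $2$ (size $2$) rather than size $1$. So the recursion $f_{n+1}=(2n-1)f_n$ from Lemma~\ref{lem:deletingroot} still governs the counts, but shifted, and I expect the generating function to be $F_{1,\varnothing,\varnothing}(z)$ with its linear term removed, giving $1-\sqrt{1-2z}-z$. I would verify this by comparing coefficients: the size-$n$ count should match the double-factorial count one size down, which corresponds exactly to deleting the degree-one term.

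Next I would treat $F_{1,\{(1,1)\},\varnothing}$ and $F_{1,\varnothing,\{2\}}$ together, since the claim is they are equal. For $F_{1,\varnothing,\{2\}}$ the diagrams have one terminal chord (decoration $1$), one non-terminal chord of decoration $2$, and all remaining chords decoration $1$; for $F_{1,\{(1,1)\},\varnothing}$ there are two terminal chords with gap $1$ between their positions, both of decoration $1$. The natural route is to insert a root chord via Lemma~\ref{lem:deletingroot} and track how the distinguished decoration-$2$ chord (resp.\ the second terminal chord) is created or inherited; this should produce a first-order linear ODE for each generating function whose inhomogeneous term involves $F_{1,\varnothing,\varnothing}$. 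I expect both ODEs to integrate to $\tfrac12\ln(1-2z)\sqrt{1-2z}+z$, and I would confirm the equality either by exhibiting a size-preserving bijection between the two families or by checking the two ODEs coincide. The main obstacle is precisely this step: bookkeeping which chord becomes terminal or acquires the decoration-$2$ status upon root insertion is delicate, and getting the logarithmic term (rather than another algebraic function) requires correctly identifying the inhomogeneity in the recursion. Once all three functions are in hand, the final formula for $H_1(z)$ follows by substituting into the $k=1$ expression from Example~\ref{ex:h0,h1}, differentiating, evaluating at $-a_{1,0}z$, and simplifying; I would expect the two equal generating functions to combine so that the prefactor $a_{2,0}+a_{1,1}a_{1,0}$ emerges and the $\ln$ terms consolidate into the stated closed form.
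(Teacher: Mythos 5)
Your proposal follows essentially the same route as the paper: the same treatment of $F_{2,\varnothing,\varnothing}$ (recursion via $\Gamma_1$ with the base case shifted to the size-$2$ diagram, giving $F_{1,\varnothing,\varnothing}(z)-z$), and the same strategy of using Lemma~\ref{lem:deletingroot} to get a first-order linear ODE with inhomogeneity $F_{1,\varnothing,\varnothing}(z)-z$ for $F_{1,\{(1,1)\},\varnothing}$, together with a bijection establishing $F_{1,\varnothing,\{2\}}=F_{1,\{(1,1)\},\varnothing}$. The ``delicate bookkeeping'' you flag is resolved in the paper by two clean dichotomies: for $F_{1,\{(1,1)\},\varnothing}$, removing the root either leaves the diagram connected (the $\Gamma_1$ case) or splits it into exactly two components whose lower one is forced to be a single chord (giving the $(2n-5)!!=|\mathcal F_{n-1}|$ inhomogeneous term), while for $F_{1,\varnothing,\{2\}}$ one splits on whether the root has decoration $1$ or $2$ and uses Lemma~\ref{lem:deletingroot} to identify $\Gamma_2\mathcal F_{n-2}\approx\Gamma_1\mathcal F_{n-2}=\mathcal F_{n-1}$, matching the two families term by term.
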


\begin{proof}
Let us begin by $F_{2,\varnothing,\varnothing}$, which is the generating function of $\omega_2$-marked diagrams with one terminal chord, which has decoration $2$, and where every other chord is not terminal and has decoration $1$. Let $\mathcal F^{{2,\varnothing,\varnothing}}_n$ be the class of such diagrams with size $n$.

When $n > 2$, the root chord of every diagram from $\mathcal F^{{2,\varnothing,\varnothing}}_n$ is not terminal (otherwise the diagram would not be connected). So it has decoration $1$. When we remove it, the resulting diagram belongs to  $\mathcal F^{{2,\varnothing,\varnothing}}_{n-1}$ by the same argument as the proof of the previous proposition. So $\mathcal F^{{2,\varnothing,\varnothing}}_n = \Gamma_1 {\mathcal F^{{2,\varnothing,\varnothing}}_{n-1}}$, and by Lemma \ref{lem:deletingroot}, the cardinality of $\mathcal F^{{2,\varnothing,\varnothing}}_n$ should satisfy 
\[  |\mathcal F^{{2,\varnothing,\varnothing}}_n| = (2n-3) |\mathcal F^{{2,\varnothing,\varnothing}}_{n-1}|.\]
But $\mathcal F^{{2,\varnothing,\varnothing}}_2$ contains only the diagram with only one chord of decoration $2$. Thus $ |\mathcal F^{{2,\varnothing,\varnothing}}_n| = (2n-3)!!$ for $n \geq 2$. So there are as many diagrams from $\mathcal F^{{2,\varnothing,\varnothing}}_n$ as non-decorated connected diagrams of size $n$ with only one terminal chord except for $n=1$ (there is no diagram in $\mathcal F^{{2,\varnothing,\varnothing}}_1$ since the decorated chord makes the size at least equal to $2$; and there is only one diagram of size $1$ with one terminal chord: the one-chord diagram). This explains why \[F_{2,\varnothing,\varnothing}(z) = F_{1,\varnothing,\varnothing}(z) - z = 1-\sqrt {1-2\,z}-z.\]

The computation of $F_{1,  \{(1,1)\} ,\varnothing}(z)$ was also done in \cite{CYchord}, but let us explain the proof again.
Let $\mathcal F^{1,  \{(1,1)\},\varnothing}_n$ be the class of diagrams counted by $F_{1,  \{(1,1)\} ,\varnothing}(z)$, i.e. the diagrams made of $n$ chords of decoration $1$,  including exactly two terminal chords, which are last and next-to-last for the intersection order. 

Consider $n \geq 3$ and $C$ a diagram from $\mathcal F^{1,  \{(1,1)\},\varnothing}_n$. Removing the root chord of $C$ leads to two possibilities, which are illustrated in Figure~\ref{twoterminalchords}:
\begin{enumerate}
\item The resulting diagram is connected. The class of such diagrams is $\Gamma_1  \mathcal F^{1,  \{(1,1)\},\varnothing}_{n-1}$ (see Definition~\ref{def:Gammaj}).
\item The resulting diagram has several connected components $C_1, \dots, C_r$, nested in that order from top to bottom. Since there are only two terminal chords and each $C_i$ contains at least one terminal chord, we have $r=2$. Moreover, since the terminal chord of $C_1$ must be at position $n-1$ for the intersection order of $C$, the diagram $C_2$ must have size $1$, i.e. $C_2$ is just constituted of one chord. So we can map $C$ to another diagram $C'$ obtained from $C$ by removing $C_2$. The diagram $C'$ has size $n-1$ and has only one terminal chord, so belongs to what we called $\mathcal F_{n-1}$ in the previous proof (the class of diagrams of size $n-1$ with only one terminal chord). One can easily recover $C$ from $C'$ by adding a chord around the right endpoint of the root chord of $C$.

\fig{[scale=1.2]{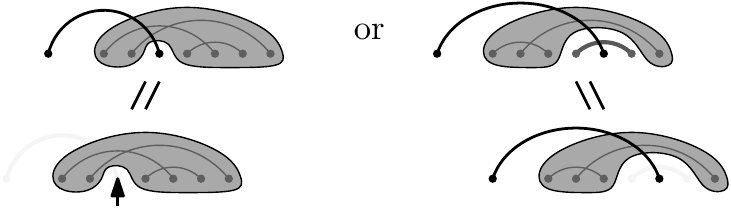}}{The two possible decompositions for a diagram in $\mathcal F^{1,  \{(1,1)\},\varnothing}_n$ }{twoterminalchords}

We thus prove that there is a bijection between diagrams from $\mathcal F^{1,  \{(1,1)\},\varnothing}_n$  satisfying this point and diagrams from $\mathcal F_{n-1}$. 
\end{enumerate}
At this point, we have proved the bijection
\[  \mathcal F^{1,  \{(1,1)\},\varnothing}_n \ \approx_{bij} \  \Gamma_1 \mathcal F^{1,  \{(1,1)\},\varnothing}_{n-1} \ \  \cup \  \    \mathcal F_{n-1}.\]
In terms of cardinality, it means
\[  \left|\mathcal F^{1,  \{(1,1)\},\varnothing}_n\right| = (2n - 3) \times \left|\mathcal F^{1,  \{(1,1)\},\varnothing}_{n-1}\right|  + (2n-5)!!
\]
(we used Lemma~\ref{lem:deletingroot}) which can be rewritten as
\[   \left|\mathcal F^{1,  \{(1,1)\},\varnothing}_n\right| \, \frac{ z^{n-1}}{(n-1)!} = 2 z \left|\mathcal F^{1,  \{(1,1)\},\varnothing}_{n-1}\right|  \frac{z^{n-2}}{(n-2)!} - \left|\mathcal F^{1,  \{(1,1)\},\varnothing}_{n-1} \right|  \, \frac{ z^{n-1}}{(n-1)!} + (2n-5)!!  \, \frac{ z^{n-1}}{(n-1)!}. \]
When we sum over $n \geq 2$, we recognize the generating function equality
\[
 \frac {\partial}{\partial z} \left( F_{1,  \{(1,1)\},\varnothing}(z) \right) = 2z  \frac {\partial}{\partial z}   \left( F_{1,  \{(1,1)\},\varnothing}(z) \right) - F_{1,  \{(1,1)\},\varnothing}(z) + 1-\sqrt {1-2\,z} - z.
\]
This differential equation can be straightforwardly solved by
\[  F_{1,  \{(1,1)\},\varnothing}(z)  = \frac 1 2 \ln  \left( 1-2\,z \right) \sqrt {1-2\,z
}+z. \]

Finally, consider $\mathcal F^{1, \varnothing,\{2\}}_n$ the class of diagrams counted by $F_{1, \varnothing,\{2\} }(z)$, i.e. the $\omega_2$-marked diagrams of size $n$ with only one terminal chord and a chord -- different from the terminal chord -- of decoration $2$.

The recursion here is simple for diagrams in  $\mathcal F^{1, \varnothing,\{2\}}_n$ : either the root chord has decoration $1$, or it has decoration $2$. In both cases, the removal of the root chord induces one single connected component  $C'$ (only one because there is only one terminal chord). When the decoration of the root chord is $1$, the diagram $C'$ belongs to  $\mathcal F^{1, \varnothing,\{2\}}_{n-1}$; when it is $2$, the diagram $C'$ belongs to $\mathcal F_{n-2}$.
That is why we obtain the bijection for $n \geq 2$
\[  \mathcal F^{1, \varnothing,\{2\}}_n \ \approx_{bij} \  \Gamma_1 \mathcal F^{1, \varnothing,\{2\}}_n  \ \  \cup \  \    \Gamma_2 \mathcal F_{n-2}  \]
Moreover, thanks to Lemma~\ref{lem:deletingroot}, we see that there is a bijection between $\Gamma_2 \mathcal F^{1, \varnothing,\varnothing}_{n-2}$ and $\Gamma_1 \mathcal F^{1, \varnothing,\varnothing}_{n-2}$. But
we saw in the proof of Proposition~\ref{prop:lls2} that $\Gamma_1 \mathcal F_{n-2}$ is $\mathcal F_{n-1}$. Therefore
\[  \mathcal F^{1, \varnothing,\{2\}}_n \ \approx_{bij} \  \Gamma_1 \mathcal F^{1, \varnothing,\{2\}}_n  \  \cup \     \Gamma_2 \mathcal F_{n-2} \ \approx_{bij} \  \Gamma_1 \mathcal F^{1, \varnothing,\{2\}}_n  \  \cup \    \mathcal F_{n-1} \  \approx_{bij} \   \mathcal F^{1,  \{(1,1)\},\varnothing}_n. \]
In other words, $\mathcal F^{1, \varnothing,\{2\}}_n$ and $ \mathcal F^{1,  \{(1,1)\},\varnothing}_n$ are in bijection, which implies the equality of the generating functions.
\end{proof}

The next-to next-to leading-log expansion is given in Proposition~\ref{NNLLs2}, after the general theory which can be used to obtain any next-to${}^k$ leading-log expansion is developed.

\subsection{Three lemmas}

The objective of this subsection is to establish some lemmas which are useful in the generic procedure (described in the next subsection) that computes the generating functions of $\omega_s$-marked diagrams with respect to their types.

As previously said, the most natural way to grow diagrams from a given set of $\omega_s$-diagrams is to insert one or several root chords into these diagrams (see Figure~\ref{rootchordsinsertion} for an illustration). It turns out that the operation of inserting an arbitrary number of root chords translates nicely to generating functions.

\fig{[width = 0.7 \textwidth]{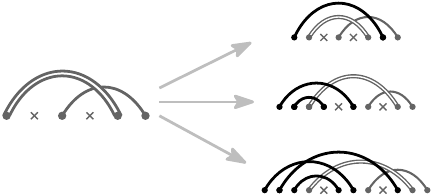}}{The three diagrams on the right are examples of diagrams obtained from the diagram on the left by root chord insertions. }{rootchordsinsertion}

\begin{lem}
Let  $\mathcal A$ be a set of  $\omega_s$-marked  diagrams whose exponential generating function is $A(z)$. 

If $\mathcal B$ denotes the set of diagrams obtained from $\mathcal A$ by successive insertions of root chords of decoration $1$ (any number of insertions is allowed, including $0$ and $1$), then the exponential generating function of $\mathcal B$ is 
\[B(z) = (1-sz)^{1/s} \left(\int_{0}^z (1-st)^{-\frac{s+1} s} A'(t) dt\right). \]
\label{lem:operator}
\end{lem}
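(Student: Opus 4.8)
The plan is to turn the construction of $\mathcal B$ into a size-recursion via the root-chord deletion of \Cref{lem:deletingroot}, convert that recursion into a first-order linear ODE for the exponential generating function $B(z)$, and solve it by an integrating factor; the closed form in the statement is exactly what that integrating factor produces.

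First I would classify the diagrams of $\mathcal B$ by whether they involve at least one inserted root chord, exactly as in the earlier proofs of this section. Writing $\mathcal A_n,\mathcal B_n$ for the size-$n$ members and $a_n,b_n$ for their cardinalities, a diagram of $\mathcal B_n$ either performs no insertion, hence lies in $\mathcal A_n$, or else its root chord is the most recently inserted one: it then has decoration $1$ and deleting it returns a single connected diagram of $\mathcal B_{n-1}$. This gives the disjoint decomposition $\mathcal B_n = \mathcal A_n \sqcup \Gamma_1\mathcal B_{n-1}$ in the sense of \Cref{def:Gammaj}. Applying \Cref{lem:deletingroot} with $j=1$ to the size-$(n-1)$ family $\mathcal B_{n-1}$ shows each of its members has exactly $s(n-1)-1$ preimages under root-chord deletion, whence
\[ b_n = a_n + \bigl(s(n-1)-1\bigr)b_{n-1}, \qquad n \ge 1, \]
with $a_0=b_0=0$ since every diagram has at least one chord. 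As \Cref{lem:deletingroot} also holds for $s=1$, this one recursion handles both cases.

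Next I would translate this into a relation between $B(z)=\sum_n b_n z^n/n!$ and $A(z)$. Writing the coefficient as $s(n-1)-1 = sn-(s+1)$ and multiplying by $z^n/n!$, the $sn$ part collapses through $n/n! = 1/(n-1)!$ into $szB(z)$, while the constant part integrates to $-(s+1)\int_0^z B$, giving
\[ B(z) = A(z) + szB(z) - (s+1)\int_0^z B(t)\,dt. \]
Differentiating to clear the integral yields the linear ODE
\[ (1-sz)B'(z) + B(z) = A'(z). \]
I expect this translation to be the main obstacle: the factor $s(n-1)-1$ is affine in $n$, so it cannot be absorbed into a purely algebraic identity among generating functions and instead forces the integral (equivalently the derivative) to appear, which is the structural reason the final answer is an integral operator rather than a product.

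Finally I would solve the ODE with the integrating factor $\mu(z)=\exp\bigl(\int \tfrac{dz}{1-sz}\bigr)=(1-sz)^{-1/s}$. One checks $(\mu B)' = (1-sz)^{-(s+1)/s}A'$, integrates from $0$ to $z$, and uses $B(0)=0$ to discard the constant of integration, recovering
\[ B(z) = (1-sz)^{1/s}\left(\int_0^z (1-st)^{-\frac{s+1}{s}}A'(t)\,dt\right), \]
as claimed. Everything after the recursion is routine once \Cref{lem:deletingroot} supplies both the decomposition and the uniform preimage count.
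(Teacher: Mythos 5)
Your proof is correct and follows essentially the same route as the paper: the recursion $b_n = a_n + (s(n-1)-1)b_{n-1}$ obtained from Lemma~\ref{lem:deletingroot}, the resulting linear ODE $(1-sz)B'(z)+B(z)=A'(z)$, and the integrating factor $(1-sz)^{-1/s}$. The only cosmetic difference is that you first write an integral equation and differentiate it, whereas the paper sums $b_{n+1}z^n/n!$ to land directly on the differential equation.
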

\begin{proof}
Let $a_n$ and $b_n$ be the number of elements  in $\mathcal A$ and $\mathcal B$ (respectively) of size $n$. Two possibilities can occur for a diagram $C$ of size $n+1$ in $\mathcal A$: either $C$ belongs to $\mathcal B$; or it comes from an insertion of a root chord of an element of $\mathcal A$ of size $n$. By Lemma~\ref{lem:deletingroot}, we deduce
\[b_{n+1}  = a_{n+1} + (sn-1) b_{n}.\]
We transform this equality into
\[\sum_{n} b_{n+1} \frac{z^n}{n!} = \sum_{n} a_{n+1} \frac{z^n}{n!} + s \times \sum_{n} b_{n} \frac{z^n}{(n-1)!} - \sum_{n} b_{n} \frac{z^n}{n!}.\]
This is the generating function equality
\[B'(z) = A'(z) + sz B'(z) - B(z).\]
By multiplying each side of this formula by $(1-sz)^{-(s+1)/s}$, we can rewrite this as
\[(1-sz)^{-1/s} B'(z) + (1-sz)^{-1/s - 1} B(z) = (1-sz)^{-(s+1)/s} A'(z). \]
We recognize the derivative of $(1-sz)^{-1/s} B(z)$ on the left side. By integrating the equation from $0$ to $z$ (plugging $z=
0$ into $(1-sz)^{-1/s} B(z)$ gives $0$ because $B(0)=0$), we recover the equation of the statement.\end{proof}

To use the previous lemma for the computation of the generating function of diagrams of a given type, we have to know what the ground set of diagrams into which we insert root chords is. This is the subject of the next lemma.

\begin{lem} \label{lem:peeling_root_chords}
Let $\mathcal F_{\Delta,\mathcal G,\mathcal D}$ be the set of $\omega_s$-marked diagrams of type $(\Delta,\mathcal G,\mathcal D)$.

Every diagram of $\mathcal F_{\Delta,\mathcal G,\mathcal D}$ 
can be obtained by successive insertions of root chords of decoration $1$ into an $\omega_s$-marked diagram $C$, where $C$ satisfies one of the three following (disjoint) possibilities:
\begin{enumerate}
\item $C$ has only one chord, and this chord has decoration $\Delta$ (only possible if $\mathcal G = \varnothing$ and $\mathcal D = \varnothing$);
\item There exists $d \in \mathcal D$ such that $C \in \Gamma_d \mathcal F_{\Delta,\mathcal G,\mathcal D \backslash \{d\}}$ (see Definition~\ref{def:Gammaj}) (only possible if $\mathcal D \neq \varnothing$);
\item The removal of the root chord of $C$ disconnects the diagram (only possible if $\mathcal G \neq \varnothing$). \label{case:disconnecting}
\end{enumerate}
\end{lem}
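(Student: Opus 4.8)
The plan is to analyze an arbitrary diagram $C' \in \mathcal F_{\Delta,\mathcal G,\mathcal D}$ by peeling off its root chords of decoration $1$ one at a time, until we reach a diagram $C$ whose root chord either has decoration $\geq 2$ or whose removal disconnects the diagram. Since successive removal of non-terminal decoration-$1$ root chords does not change the type (this is exactly the point noted right after the definition of type: ``any number of non-terminal chords decorated by $1$ can be added in while not affecting the type''), the diagram $C$ we reach still has type $(\Delta,\mathcal G,\mathcal D)$, and $C'$ is recovered from $C$ by reinserting those decoration-$1$ root chords. So it suffices to classify which diagrams $C$ of type $(\Delta,\mathcal G,\mathcal D)$ can arise as such a ``bottom'' of the peeling — that is, diagrams whose root chord is \emph{not} a non-terminal chord of decoration $1$. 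I would then split into cases according to the behavior of the root chord of $C$.

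\textbf{The case analysis.}
First I would observe that the peeling must terminate, and when it does, the root chord of the resulting $C$ falls into exactly one of three mutually exclusive situations. If $C$ has a single chord, then that chord is simultaneously the root and the unique terminal chord, so its decoration is $\Delta$ and there are no further terminal chords and no non-terminal decorated chords; this forces $\mathcal G = \varnothing$ and $\mathcal D = \varnothing$, giving case~(1). Otherwise $C$ has at least two chords, and its root chord is non-terminal (a terminal root would disconnect a multi-chord connected diagram trivially, contradicting connectedness). If removing the root chord leaves the diagram connected, then since we have peeled away all decoration-$1$ non-terminal roots, this root must have some decoration $d \geq 2$. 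That root chord is then a non-terminal chord of decoration $d \geq 2$, so $d$ is one of the elements of $\mathcal D$; deleting it yields a connected diagram of type $(\Delta,\mathcal G,\mathcal D\setminus\{d\})$, which by Definition~\ref{def:Gammaj} means $C \in \Gamma_d \mathcal F_{\Delta,\mathcal G,\mathcal D\setminus\{d\}}$, giving case~(2) and requiring $\mathcal D \neq \varnothing$. The remaining possibility is that removing the root chord disconnects the diagram, which is case~(3).

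\textbf{The main obstacle.}
The delicate point is case~(3): I need to verify that disconnection genuinely requires $\mathcal G \neq \varnothing$, and that the three cases are truly disjoint and exhaustive. The key is that the connected components produced by deleting the root chord each contribute at least one terminal chord (the last chord of each component in intersection order is terminal), so disconnection into $r \geq 2$ pieces forces $C$ to have at least $r \geq 2$ terminal chords, whence $\ell \geq 2$ and $\mathcal G \neq \varnothing$. I must also confirm that the type recorded by $(\Delta,\mathcal G,\mathcal D)$ is genuinely invariant under reinsertion of decoration-$1$ non-terminal root chords — that such an insertion adds neither a terminal chord nor a decorated non-terminal chord, and leaves both $\Delta$ and the gaps $t_j - t_{j-1}$ recorded in $\mathcal G$ unchanged — so that peeling and reinserting stays inside $\mathcal F_{\Delta,\mathcal G,\mathcal D}$. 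Once these invariance and component-counting facts are pinned down, exhaustiveness is immediate since the three situations (single chord; root non-terminal with connected removal; root removal disconnecting) cover every diagram, and disjointness follows because they are distinguished by whether $C$ has one chord and by whether deleting the root disconnects.
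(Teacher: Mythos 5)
Your proof is correct and takes essentially the same approach as the paper's: the paper disposes of this lemma in one line (a diagram not in $\Gamma_1 \mathcal F_{\Delta,\mathcal G,\mathcal D}$ must satisfy one of the three possibilities), and your peeling argument is precisely that observation iterated until it terminates, with the type-invariance under removal of decoration-$1$ root chords and the terminal-chord count per connected component made explicit. Nothing to fix.
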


\begin{proof} Straightforward. If a diagram does \textit{not} belong to  $\Gamma_1 \mathcal F_{\Delta,\mathcal G,\mathcal D}$, then it must satisfy one of these three possibilities.
\end{proof}

\begin{ex} The two previous lemmas enable the computation of simple generating functions. Let us calculate the generating function of diagrams with type $(1,\varnothing,\varnothing)$ for any $s\geq 2$.  By Lemma~\ref{lem:peeling_root_chords}, the set of such diagrams is also the set obtained by inserting an arbitrary number of root chords into the diagram with only one chord, which has decoration $1$. So, by Lemma~\ref{lem:operator} (here $A(z) = z$), its generating function is 
\[ F_{1,\varnothing,\varnothing}(z) = (1-sz)^{1/s} \left(\int_{0}^z (1-st)^{-\frac{s+1} s} dt\right) = 1 - (1-sz)^{1/s}. \]
If we translate this expression in terms of coefficients, we see that the number of diagrams of type $(1,\varnothing,\varnothing)$ and of size $n$ is equal to $(s(n-1)-1) \times (s(n-2) - 1) \times (s(n-3) - 1) \times \dots \times (s - 1)$ (which corresponds to the double factorial $(2n-3)!!$ for $s=2$, and the triple factorial $(3n-4)!!!$ for $s=3$).
For $s=2$, we recover Proposition~\ref{prop:lls2}.

We can also compute the generating function of diagrams with type $(1,\varnothing,\{2\})$ for any $s \geq 2$. Indeed, by Lemma~\ref{lem:peeling_root_chords}, these diagrams can be obtained by root chord insertions from diagrams of $\Gamma_2 \mathcal F_{1,\varnothing,\varnothing}$. To apply  Lemma~\ref{lem:operator}, we need to compute the generating function of the latter diagrams. To do so, let us denote $f_{n}$ the number of diagrams of type $(1,\varnothing,\varnothing)$ and of size $n$. By Lemma~\ref{lem:deletingroot}, since the root chord has decoration $2$, the number of diagrams of size $n$ in $\Gamma_2 \mathcal F_{1,\varnothing,\varnothing}$ is equal to $(s(n-2)-1) f_{n-2}$. So the generating function of such diagrams is
\[ \sum_{n \geq 3} (s(n-2)-1) f_{n-2} \frac{z^n}{n!} = s \sum_{n \geq 3} n f_{n-2} \frac{z^n}{n!} - (2s+1) \sum_{n \geq 3}f_{n-2} \frac{z^n}{n!}.\]
We recognize from the right the function $s z \int F_{1,\varnothing,\varnothing} dz - (2s+1) \int \int F_{1,\varnothing,\varnothing} dz^2$ where $\int y\  dz$ and $\int \int y\  dz^2$ respectively denote the antiderivative and the second antiderivative of a function $y(z)$ (Leibniz's notation). We find after calculations that the generating function of  $\Gamma_2 \mathcal F_{1,\varnothing,\varnothing}$ is 
\[\frac{(1 - s z)^{\frac 1 s + 1}}{s + 1} - \frac{ z^2 } 2 + z - \frac 1 {s+1}. \]
So by Lemma~\ref{lem:operator}, we can deduce that 
\[F_{1,\varnothing,\left\{2\right\}}(z) = {\frac {s \left( s+z-2 \right) + \left( 1-sz \right) ^{1/s}
 \left(  \left( s-1 \right) \ln  \left( 1-sz \right) - \left( s-2
 \right) s \right) }{s \left( s-1 \right) }}.
 \]
\end{ex}

As these examples illustrate, the computations of the $F_{\Delta,\mathcal G, \mathcal D}$ functions  can be deduced from the enumeration of diagrams satisfying cases (1), (2) or (3) from Lemma~\ref{lem:peeling_root_chords}. But Case \eqref{case:disconnecting}, which occurs whenever $\mathcal G \neq \varnothing$, is substantially harder to understand. The next lemma establishes a bijection  from the set of diagrams satisfying Case \eqref{case:disconnecting}, which facilitates their enumeration.

\begin{lem} Let $\mathcal F_n^{\Delta,\mathcal G,\mathcal D}$ be the set of $\omega_s$-marked diagrams of size $n$ and of type $(\Delta,\mathcal G,\mathcal D)$.

The subset of $\mathcal F_n^{\Delta,\mathcal G,\mathcal D}$ formed with diagrams such that the removal of the root chord of $C$ disconnects the diagram (case \eqref{case:disconnecting} from Lemma~\ref{lem:peeling_root_chords})  are in bijection with the set of pairs $(C_1,C_2)$, such that:
\begin{enumerate}
\item there exists a decomposition of $\mathcal G$ such that $\mathcal G = \{(\delta,\gamma)\} \cup \mathcal G_1 \cup \mathcal G_2$ \label{item1};
\item there exists a decomposition of $\mathcal D$ such that $\mathcal D = \mathcal D_1 \cup \mathcal D_2$ \label{item2};
\item $C_1 \in \mathcal F_{n-m}^{\Delta,\mathcal G_1,\mathcal D_1}$ \label{item3};
\item $C_2 \in \Gamma_1 \mathcal F_{m}^{\delta,\mathcal G_2,\mathcal D_2}$ \label{item4};
\item $m = \gamma  + \delta - 1 + \sum_{(d,g) \in \mathcal G_2} (d+g-1) + \sum_{d \in \mathcal D_2} (d-1)$ \label{item5};
\item $\gamma =  1 \Rightarrow  \mathcal G_2 = \mathcal D_2 = \varnothing$ .\label{item6}
\end{enumerate}
%
\label{lem:when_disconnection}
\end{lem}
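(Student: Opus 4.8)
The plan is to build an explicit, invertible cut-and-glue map between the disconnecting diagrams of $\mathcal F_n^{\Delta,\mathcal G,\mathcal D}$ and the pairs $(C_1,C_2)$, peeling off the \emph{last} component created by deleting the root. Given such a $C$, let $R$ be its root chord and let $C_1',\dots,C_r'$ (with $r\ge 2$, since the root disconnects) be the connected components of $C$ minus $R$, ordered by leftmost vertex as in \Cref{def:intersection}. I take $C_1$ to be $C$ with every chord of the last component $C_r'$ erased (so $C_1$ keeps $R$ and $C_1',\dots,C_{r-1}'$), and I take $C_2$ to be $C_r'$ equipped with a fresh root chord of decoration $1$ whose right endpoint is placed in the gap of $C_r'$ that held the right endpoint of $R$ in $C$; equivalently, the new root crosses exactly the chords of $C_r'$ that $R$ crossed. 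The inverse glues back: from $(C_1,C_2)$ strip the decoration-$1$ root of $C_2$ to get a connected $D$, append $D$ as a new final component of $C_1$, and lengthen the root of $C_1$ so that it crosses precisely the chords of $D$ that the root of $C_2$ crossed. The first task is to verify these are mutually inverse at the level of connected diagrams.

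The type-level bookkeeping is then formal. Since $C_r'$ is the last block of the intersection order, erasing it leaves the positions $1,\dots,|C_1|$ of the surviving chords untouched, so $t_1(C_1)=t_1(C)$ and $C_1$ inherits the first terminal decoration $\Delta$ together with the terminal gaps and non-terminal decorations giving $\mathcal G_1,\mathcal D_1$. The diagram $D=C_r'$ is connected of some type $(\delta,\mathcal G_2,\mathcal D_2)$ (read off after a uniform shift of positions), and the distinguished pair $(\delta,\gamma)$ is the decoration and gap of the \emph{first} terminal chord $t_p$ of $C_r'$. The key observation is that the chord immediately before $C_r'$ -- the last chord of the preceding block -- is terminal, whence $t_{p-1}=|C|-|C_r'|$ and $\gamma=t_p-t_{p-1}$ equals the position $t_1(D)$ of the first terminal chord of $D$. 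Feeding $t_1(D)=\gamma$ into \Cref{lem:t1} applied to $D$ yields the size identity, the decompositions $\mathcal G=\{(\delta,\gamma)\}\cup\mathcal G_1\cup\mathcal G_2$ and $\mathcal D=\mathcal D_1\cup\mathcal D_2$ record the side on which each chord lands, and the degenerate condition is immediate: $\gamma=1$ forces the root of $D$ to be terminal, hence $D$ a single chord, so $\mathcal G_2=\mathcal D_2=\varnothing$.

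The substantial part, and the step I expect to be the main obstacle, is showing both maps respect the $\omega_s$-marked structure of \Cref{def:marked}, since it is the \emph{marked} objects that must be matched. The $C_1$ half is the gentler one: erasing an entire later component only frees or merges intervals around the deleted dots and never revokes a surviving chord's claim on the intervals it already covers, so each surviving chord keeps exactly its $d(c)s-2$ marks and the marking axiom survives. The $C_2$ half is the delicate one: I must show that every chord of $C_r'$ covers the same intervals in $C_2$ as it did in $C$, so that its marks transfer verbatim while the fresh root supplies its own forced $s-2$ marks in its single covered interval. This reduces to the structural claim that in a connected diagram no chord of an earlier component can be nested beneath a chord of the last component; granting it, the only foreign dot ever lying inside a last-component chord is the right endpoint of $R$, which the fresh root reintroduces in the same relative position, so the coverings -- and hence the marks -- coincide.

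Thus the crux is this nesting-exclusion claim, whose proof I would obtain from connectivity: a chord witnessing such a nesting would have to join the rest of its (earlier) component, which must itself cross $R$ to the right of $R$'s right endpoint, and any chord effecting that linkage is forced to cross the enclosing last-component chord, contradicting their being in different components. The remaining care is to confirm that the position of $R$'s right endpoint among the dots of $C_r'$ -- which determines exactly which chords of $C_r'$ are crossed -- is faithfully and recoverably encoded by the decoration-$1$ root of $C_2$, so that cut and glue are genuine inverses; the mark count then balances on its own, the pair carrying precisely $s-2$ more marks than $C$, contributed by the fresh root. The same argument applies when $s=1$ using the $\omega_1$-marked convention of the following subsection, where a decoration-$1$ chord instead covers a single mark-free interval.
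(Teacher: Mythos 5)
Your proposal is correct and follows essentially the same route as the paper: cut off the last (innermost) component $D_r$ to form $C_1$, attach a decoration-$1$ root to $D_r$ to form $C_2$, read off $(\delta,\gamma)$ from the first terminal chord of $D_r$ using that the last chord of the preceding block is terminal, and glue back by identifying the right endpoints of the two roots. Your only departures are cosmetic or slightly more careful than the paper -- invoking Lemma~\ref{lem:t1} for the size identity rather than recomputing it, and making explicit the nesting-exclusion/covering-preservation argument for the marks that the paper leaves implicit.
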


\fig{[width=0.9 \textwidth]{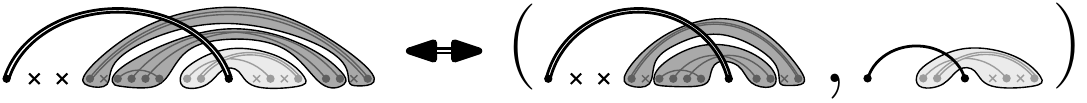}}{Illustration of Lemma~\ref{lem:when_disconnection}. }{disconnection}

\begin{proof}Let $C$ be a diagram from $\mathcal F_n^{\Delta,\mathcal G,\mathcal D}$ such that the removal of the root chord disconnects $C$ into several connected $\omega_s$-marked components $D_1, D_2, \dots, D_r$. We assume that the diagrams $D_i$ are sorted from top to bottom, and each of these diagrams inherits the marks inside the intervals covered by its chords. We denote $C_1$ the diagram obtained from $C$ by deleting $D_r$, and $C_2$ the diagram obtained from $C$ by deleting $D_1, D_2, \dots, D_{r-1}$ and setting the decoration of the root chord to be $1$ (see Figure~\ref{disconnection} for an example).

Let us prove the six items of the lemma.

Whenever the root chord of $C_2$ (which had decoration $1$) is removed, by construction, there remains only one component, which is $D_r$. Let $m$ be the size of $D_r$, and $(\delta,\mathcal G_2,\mathcal D_2)$ be the type of $D_r$. We thereby recover that $C_2 \in \Gamma_1 \mathcal F_{m}^{\delta,\mathcal G_2,\mathcal D_2}$, i.e item \eqref{item4}.

Since it is the union of $C_1$ and $D_r$, the diagram $C$ has as its size $m$ \textit{plus} the size of $C_1$. That is why the size of $C_1$ is $n-m$. Moreover, the first terminal chord of $C_1$ is inside $D_1$ so it is also the first terminal  of $C$. Therefore it has decoration $\Delta$. We set $(\Delta,\mathcal G_1,\mathcal D_1)$ to be the type of $C_1$. Thus item \eqref{item3} holds.

Let us prove now item \eqref{item2}, that is to say $\mathcal D = \mathcal D_1 \cup \mathcal D_2$. Since chords of $D_r$ never intersect chord of $C_1$ (the root chord of $C$ excluded), the set of chords which are not terminal in $D_r$ or $C_1$ is also the set of the non-terminal chords in $C$. Moreover, non-terminal chords carries the same decorations between $C$ and $C_1 \cup D_r$. This explains $\mathcal D = \mathcal D_1 \cup \mathcal D_2$.

Let us go on with item \eqref{item1}. Let $t'_1,t'_2,\dots,t_r'$ the positions of the terminal chords in $C_1$, and $t_1'',t_2'',\dots,t_s''$ their positions in $D_r$.
By definition of the intersection order (see Definition~\ref{def:intersection}), the chords of $C_1$ are smaller than all the chords of $D_r$, and their relative positions remain the same. So the positions of the terminal chords in $C$ are  $t'_1,t'_2,\dots,t_r', |C_1| + t_1'', |C_1| + t_2'',\dots, |C_1| + t_s''$ (recall that $|C_1|$ is the number of chords of $C_1$, not its size) in that order. By definition of $\mathcal G$, we then have \[\mathcal G = \mathcal\  G_1 \cup \left\{ (d(|C_1| + t_1''), t_1'' +  |C_1| - t_r' ) \right\} \  \cup \  \mathcal G_2.\]
The chord labeled by $|C_1| + t_1''$ in $C$ is the chord labeled by $t_1''$ in $D_r$. So $d(|C_1| + t_1'') = \delta$. Thus, by setting $ \gamma = t_1'' +  |C_1| - t_r'$, we recover item \eqref{item1}. Note that $\gamma = t_1''$ since $t'_r$  is  the label of the last terminal in $C_1$ so is equal to the number of chords in $C_1$ (the intersection order always finishes by a terminal chord).

As for item \eqref{item5}, we know that $t_s''$ is equal to the number of chords of $D_r$ (for the same reason as above). Moreover, the size of a diagram is equal to the number of chords plus the sum over all chords of the difference between their decorations and $1$. Hence
\[ m = t_s'' + \delta - 1 + \sum_{(d,g) \in \mathcal G_2} (d-1) + \sum_{d \in \mathcal D_2} (d-1).\]
But $t_s'' =  t_1'' + \sum_{i=2}^s (t_i'' - t_{i-1}'')  =  \gamma  +  \sum_{(d,g) \in \mathcal G_2} g.$ By combining both previous equations, we obtain the expression of $m$ given by item \eqref{item5}.

There remains item \eqref{item6} to prove. Assume that $\gamma = 1$. It implies that $t_1'' = 1$, which means that the root chord of $D_r$ is terminal. But a root chord is terminal in a connected diagram if and only if the diagram has only one chord. Thus we have $m = \delta$, $ \mathcal G_2 = \mathcal D_2 = \varnothing$.

For the reverse construction, consider $C_1 \in \mathcal F_{n-m}^{\Delta,\mathcal G_1,\mathcal D_1}$ and $C_2 \in \Gamma_1 \mathcal F_{m}^{\delta,\mathcal G_2,\mathcal D_2}$ satisfying the six above items. We can recover $C$ by the following construction: insert $C_2$ inside $C_1$ so that $C_2$ lies on the two intervals adjacent to the right endpoint of the root chord of $C_1$, and so that the right endpoint of the root chord of $C_2$ coincides with the right endpoint of the root chord of $C_1$. Then remove the root chord of $C_2$. The resulting diagram is $C$.
\end{proof}

%
%
%
%
%

\subsection{Description of the generic method}
\label{ss:method}
This subsection presents a generic procedure that yields an explicit formula for the next-to$^k$ leading log expansion (for any $s$) via the computation of the exponential generating functions of $\omega_s$-marked diagrams with respect to their types (see Theorem~\ref{theo:intermsofwgf}). 

The procedure to compute the generating function $F_{\Delta,\mathcal G,\mathcal D }(z)$ of $\omega_s$-marked diagrams of type $(\Delta,\mathcal G,\mathcal D)$ is the following:
\begin{enumerate}
\item Using Lemma~\ref{lem:peeling_root_chords}, determine all the possible forms of diagrams which generate all diagrams of type $(\Delta,\mathcal G,\mathcal D )$ via successive insertions of root chords. We want  to compute the generating function $A(z)$ of such diagrams.
\begin{enumerate}
\item If $\mathcal D = \mathcal G = \varnothing$, the diagrams of type $(\Delta,\mathcal G,\mathcal D)$ are all generated from  the one-chord diagram of decoration $\Delta$ (case (1) of Lemma~\ref{lem:peeling_root_chords}). Then set $A(z) = z^\Delta/\Delta!$. Otherwise, start from $A(z) = 0$.
\item If $\mathcal D \neq \varnothing$, then we have to consider all diagrams satisfying (2) in Lemma~\ref{lem:peeling_root_chords}. In this case, for every $d \in \mathcal D$, we recursively compute $F_{\Delta,\mathcal G,\mathcal D \backslash \{d\}}$. The generating function we are looking for is $\sum_n (s(n-d) - 1) f_{n-d} \frac {z^n} {n!}$, where $f_{n-d}$ is the $(n-d)$ coefficient of $F_{\Delta,\mathcal G,\mathcal D \backslash \{d\}}$. One then has to add  the function
\[\sum_n (sn-sd- 1) f_{n-d} \frac {z^n} {n!} = s z \underbrace{\int \dots \int}_{(d-1)\textrm{ times}} F_{\Delta,\mathcal G,\mathcal D \backslash \{d\}} dz^{d-1} - (sd+1) \underbrace{\int \hspace{-5pt} \int \dots \int}_{d\textrm{ times}} F_{\Delta,\mathcal G,\mathcal D \backslash \{d\}} dz^{d}\]
to $A(z)$ for every $d \in \mathcal D$.
\item If $\mathcal G \neq \varnothing$, then we have to compute the generating function of diagrams in which the removal of the root chord disconnects the diagram. To do so, we use Lemma~\ref{lem:when_disconnection}:
we list all the decompositions $\mathcal G = \{(\delta,\gamma)\} \cup \mathcal G_1 \cup \mathcal G_2$ and  $\mathcal D = \mathcal D_1 \cup \mathcal D_2$ satisfying \eqref{item6} in Lemma~\ref{lem:when_disconnection}.
Note that there are finitely many such decompositions.  For every such decomposition, we recursively compute $F_{\Delta,\mathcal G_1,\mathcal D_1}$ and $F_{\delta,\mathcal G_2,\mathcal D_2}$. At this point, set $m = \gamma  + \delta - 1 + \sum_{(d,g) \in \mathcal G_2} (d+g-1) + \sum_{d \in \mathcal D_2} (d-1)$. Then we
want to find $\sum_n (sm-1) f_{m}^{\delta,\mathcal G_2,\mathcal D_2} f_{n-m}^{\Delta,\mathcal G_1,\mathcal D_1} \frac{z^n}{n!}$, where $f_k^{\Delta,\mathcal G,\mathcal D }$ denotes the number of diagrams of type $(\Delta,\mathcal G,\mathcal D )$. To do so, we compute the
 $m$th antiderivative of $F_{\Delta,\mathcal G_1,\mathcal D_1}$, then multiply it by the coefficient of $z^m/m!$ in $F_{\delta,\mathcal G_2,\mathcal D_2}$, and finally multiply it by $(sm - 1)$. As before, we add the resulting function to $A(z)$,  for every decomposition $\mathcal G = \{(\delta,\gamma)\} \cup \mathcal G_1 \cup \mathcal G_2$ and  $\mathcal D = \mathcal D_1 \cup \mathcal D_2$.
\end{enumerate}

\item $A(z)$ is finally the generating function of diagrams which generate the set of diagrams of type  $(\Delta,\mathcal G,\mathcal D)$ with root chord insertions. By Lemma~\ref{lem:operator}, $F_{\Delta,\mathcal G,\mathcal D}(z)$ is thus \[(1-sz)^{1/s} \left(\int_{0}^z (1-st)^{-\frac{s+1} s} A'(t) dt \right).\] 
\end{enumerate}

\begin{ex} Set $s=2$, $\Delta = 3$, $\mathcal G = \{ (2,1),(2,1) \}$ and $\mathcal D = \{4\}$. We have $\mathcal D \neq \varnothing$. One one hand, the generating function of  all diagrams satisfying (2) in Lemma~\ref{lem:peeling_root_chords} is given by
\[ 2 z \, \int \hspace{-3pt} \int \hspace{-3pt} \int  F_{3,\{ (2,1),(2,1) \},\varnothing}(t) dt^3 - 9 \int \hspace{-3pt} \int \hspace{-3pt} \int \hspace{-3pt} \int  
F_{3,\{ (2,1),(2,1) \},\varnothing}(t) dt^4.\]
One the other hand, diagrams of type $(3,\{ (2,1),(2,1) \},\{4\})$ such that the removal of the root chord disconnects the diagrams are all given by Lemma~\ref{lem:when_disconnection} by $(\delta,\gamma) = (2,1)$, $\mathcal G_1 = \{ (2,1) \}$, $\mathcal D_1 = \{4\}$, and $\mathcal G_2 = \mathcal D_2 = \varnothing$ (remember item \eqref{item6} from the same lemma). So knowing that $F_{\delta,\mathcal G_2,\mathcal D_2} = z^2/2!$, and $m=2$, the generating function of diagrams satisfying (3) in Lemma~\ref{lem:peeling_root_chords} is given by
\[ 3 \int \hspace{-3pt} \int  
F_{3,\{ (2,1) \},\{4\} }(t) dt^2.\]
After recursive computations, and use of Lemma~\ref{lem:operator}, we find a formula for $F_{3,\{ (2,1),(2,1) \},\{4\} }(z)$, namely
\begin{multline*}
{\frac {1}{78274560}}\,{\upsilon}^{8}-{\frac {1}{3311616}}\,{\upsilon}^{7}
+{\frac {1}{168960}}\,{\upsilon}^{6}-{\frac {1}{55440}}\,{\upsilon}^{11/2}
+{\frac {1}{129024}}\,{\upsilon}^{5}+{\frac {1}{14112}}\,{\upsilon}^{9/2}-
{\frac {17}{86016}}\,{\upsilon}^{4} \\ +{\frac {1}{3920}}\,{\upsilon}^{7/2}-{
\frac {1}{6144}}\,{\upsilon}^{3}+{\frac {1}{10752}}\,{\upsilon}^{2}-{
\frac {1}{12012}}\,{\upsilon}^{3/2}+{\frac {169}{4515840}}\,\upsilon-{
\frac {1}{110880}}\,\sqrt {\upsilon}+{\frac {37}{39739392}}
\end{multline*}
where $\upsilon = 1-2z$.
\end{ex}

With this method, one can automatically compute every next-to$^k$ leading-log expansion for any $s$. For example, to compute it with $k=2$ and $s=2$, one should compute every generating function listed in Table~\ref{table:forntntlle} (plus the ones shown in Propositions~\ref{prop:lls2} and~\ref{prop:ntlle}).
One should find at the end the following expression.

\begin{table}[h!]
\begin{center}
\begin{tabular}{|c|c|} \hline
 Type & Generating function \\
 \hline

$(1,\varnothing,\{2,2\})$ & 
${\frac {17}{24}} - \left( \frac 1 8\, \ln  \left( 1-2\,z
 \right)  ^{2} + \frac 5 6 \right) \sqrt {1-2\,z}-\frac 1 2\,z
+\frac 1 6\, \left( 1-2\,z \right) ^{3/2}-\frac 1 {24}\, \left( 1-2\,z \right) ^{2}
$

 \\

$(1,\{(1,1)\},\{2\})$ & 
$
{\frac {41}{24}}- \left( \frac 1 4\,  \ln    \left( 1-2\,z
  \right) ^{2}+{\frac {11}{6}} \right) \sqrt {1-
2\,z}-\frac 3 2\,z+\frac 1 6 \, \left( 1-2\,z \right) ^{3/2}-\frac 1 {24}\, \left( 1-2\,z
 \right) ^{2}
$
 
 \\

$(1,\{(1,1),(1,1)\},\varnothing)$
&
$
1 - \left( \frac 1 8\, \ln \left( 1-2\,z 
  \right) ^{2}  + 1 \right) \sqrt {1-2\,z}-z
$

\\

$(2,\varnothing,\{2\})$ &
$
-\frac 3 8+ \left( \frac 1 3 + \frac 1 2\,\ln  \left(  1-2\,z 
 \right)  \right) \sqrt {1-2\,z}+\frac 3 2\,z+\frac 1 {24}\, \left( 1-2\,z \right) ^
{2}
$

\\

$(2,\{1,1\})$ &

$
-\frac 3 8+ \left( \frac 1 3 + \frac 1 2\,\ln  \left(  1-2\,z 
 \right)  \right) \sqrt {1-2\,z}+\frac 3 2\,z+\frac 1 {24}\, \left( 1-2\,z \right) ^
{2}
$

\\

 $(1,\{2,1\},\varnothing)$ &
$ 
{\frac {7}{8}}+ \frac 1 8\, \left( 1-2\,z \right) ^{2}-\frac 1 2\, \left( 1-2\,z
 \right) ^{3/2}-\frac 3 2\,z-\frac 1 2\,\sqrt {1-2\,z}

$

\\

 $(1,\{1,2\},\varnothing)$ &
$ 
{\frac {7}{8}}+ \frac 1 8\, \left( 1-2\,z \right) ^{2}-\frac 1 2\, \left( 1-2\,z
 \right) ^{3/2}-\frac 3 2\,z-\frac 1 2\,\sqrt {1-2\,z}

$

\\

 $(1,\varnothing,\{3\})$ & $
 
{\frac {7}{24}}+\frac 1 {24}\, \left( 1-2\,z \right) ^{2}-\frac 1 6\, \left( 1-2\,z
 \right) ^{3/2}-\frac 1 2\,z-\frac 1 6\,\sqrt {1-2\,z}

$
\\

$(3,\varnothing,\varnothing)$

&
$
\frac 3 8-\frac 1 {24}\, \left( 1-2\,z \right) ^{2}-\frac 1 2\,z-\frac 1 3\,\sqrt {1-2\,z}

$
\\
\hline

\end{tabular}
\end{center}

\caption{Generating functions involved in the next-to next-to leading log expansion ($s=2$)}
\label{table:forntntlle}
\end{table}

\begin{proposition}\label{NNLLs2}For $s=2$, the next-to next-to leading log expansion is given by
\begin{multline*} 
H_2(z) = {\frac { \left( a_{{2,0}}+a_{{1,1}}a_{{1,0}} \right) ^{2} \left( 
\ln  \left(  1+2\,za_{{1,0}}  \right)  \right) ^{2
}}{ 8 \left( 1+2\,za_{{1,0}} \right) ^{3/2}{a_{{1,0}}}^{2}}}-{
\frac { \left( a_{{2,0}}+a_{{1,1}}a_{{1,0}} \right) ^{2}\ln  \left( 
 1+2\,za_{{1,0}}  \right) }{2 \left( 1+2\,za_{{1,0}
} \right) ^{3/2}{a_{{1,0}}}^{2}}}\\-{\frac {z \left( -{a_{{2,0}}}^{2}+3
\,{a_{{1,0}}}^{3}a_{{1,2}}+3\,a_{{2,1}}{a_{{1,0}}}^{2}-a_{{1,1}}a_{{2,0
}}a_{{1,0}}+a_{{3,0}}a_{{1,0}} \right) }{ \left( 1+2\,za_{{1,0}}
 \right) ^{3/2}a_{{1,0}}}}.
\end{multline*}
\end{proposition}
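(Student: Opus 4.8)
The plan is to specialize Theorem~\ref{theo:intermsofwgf} to $k=2$ and $s=2$, which turns $H_2$ into a \emph{finite} linear combination of the second derivatives $\partial_z^2 F_{\Delta,\mathcal G,\mathcal D}(-a_{1,0}z)$ ranging over all types with $t(\Delta,\mathcal G,\mathcal D)\le 2$. First I would enumerate these types. Since every summand defining $t(\Delta,\mathcal G,\mathcal D)$ is a positive integer, the constraint $t\le 2$ admits only finitely many triplets; a direct inspection yields thirteen of them. Four have $t\le 1$, namely $(1,\varnothing,\varnothing)$, $(2,\varnothing,\varnothing)$, $(1,\{(1,1)\},\varnothing)$ and $(1,\varnothing,\{2\})$; the remaining nine all have $t=2$ and are precisely the triplets $(3,\varnothing,\varnothing)$, $(2,\{(1,1)\},\varnothing)$, $(2,\varnothing,\{2\})$, $(1,\{(1,1),(1,1)\},\varnothing)$, $(1,\{(2,1)\},\varnothing)$, $(1,\{(1,2)\},\varnothing)$, $(1,\{(1,1)\},\{2\})$, $(1,\varnothing,\{3\})$ and $(1,\varnothing,\{2,2\})$ listed in Table~\ref{table:forntntlle}.

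Next I would record, for each type, the scalar prefactor $a_{1,0}^{2-\Delta}\,a_{\Delta,2-t}\prod_{d\in\mathcal D}a_{d,0}/a_{1,0}^d\prod_{(d,g)\in\mathcal G}a_{d,g}/a_{1,0}^d$ from Theorem~\ref{theo:intermsofwgf}. This is a purely mechanical substitution: for instance $(1,\varnothing,\varnothing)$ and $(1,\{(1,2)\},\varnothing)$ both contribute $a_{1,0}a_{1,2}$; $(2,\varnothing,\varnothing)$ and $(1,\{(2,1)\},\varnothing)$ both give $a_{2,1}$; the three types $(1,\varnothing,\{2\})$, $(2,\{(1,1)\},\varnothing)$, $(1,\{(1,1)\},\{2\})$ each give $a_{1,1}a_{2,0}/a_{1,0}$; and so forth. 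Then I would supply the generating functions themselves: the first four are already available from Propositions~\ref{prop:lls2} and~\ref{prop:ntlle}, and the other nine are obtained by running the generic procedure of Section~\ref{ss:method}. Concretely, each such computation decomposes the type via Lemma~\ref{lem:peeling_root_chords} into the one-chord base case, the $\mathcal D$-peeling case, and the root-deletion disconnection case, assembles the generating function $A(z)$ of the ground diagrams into which root chords are inserted, and finally applies the integral operator of Lemma~\ref{lem:operator}. The disconnection contributions, which occur exactly when $\mathcal G\neq\varnothing$, are the delicate part and require Lemma~\ref{lem:when_disconnection} to list the admissible splittings $\mathcal G=\{(\delta,\gamma)\}\cup\mathcal G_1\cup\mathcal G_2$, $\mathcal D=\mathcal D_1\cup\mathcal D_2$; this is where the bookkeeping is heaviest, and it is reassuring that the $s=2$ coincidences $F_{1,\{(2,1)\},\varnothing}=F_{1,\{(1,2)\},\varnothing}$ and $F_{2,\varnothing,\{2\}}=F_{2,\{(1,1)\},\varnothing}$ already visible in Table~\ref{table:forntntlle} signal cancellations to come.

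With the thirteen functions in hand, I would differentiate each twice, substitute $z\mapsto -a_{1,0}z$, scale by the corresponding prefactor, and sum. Because many types share a prefactor, the terms regroup: the combinations proportional to $a_{2,0}^2/a_{1,0}^2$, $a_{1,1}a_{2,0}/a_{1,0}$ and $a_{1,1}^2$ assemble into $(a_{2,0}+a_{1,1}a_{1,0})^2/a_{1,0}^2$, producing the two displayed $\ln(1+2za_{1,0})$-terms, while the contributions carrying $a_{1,2}$, $a_{2,1}$, $a_{3,0}$ and the mixed products collapse into the single rational $(1+2za_{1,0})^{-3/2}$-term. Finally I would fix $\kappa$ as the unique constant forcing the entire right-hand side to vanish at $z=0$; since each of the three displayed terms already vanishes there (the $\ln$-terms because $\ln 1=0$, the rational term because of its factor $z$), this confirms $H_2(0)=0$ and shows that $\kappa$ has been exactly absorbed by the constant terms of the second derivatives. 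The main obstacle is not conceptual but the sheer volume of exact computation—pushing the nine generating functions through the antiderivative-and-operator machinery and then checking that the large algebraic combination telescopes to the stated closed form—which is precisely the step handled by the automation described in Section~\ref{ss:method}.
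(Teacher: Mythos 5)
Your proposal is correct and follows essentially the same route as the paper: the paper's own (implicit) proof of Proposition~\ref{NNLLs2} is exactly to enumerate the types with $t(\Delta,\mathcal G,\mathcal D)\le 2$, compute the nine $t=2$ generating functions of Table~\ref{table:forntntlle} (together with those of Propositions~\ref{prop:lls2} and~\ref{prop:ntlle}) by the generic method of Section~\ref{ss:method}, and assemble them via Theorem~\ref{theo:intermsofwgf}. Your type count of thirteen, the prefactors you record, and your treatment of $\kappa$ all check out against the paper's data.
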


Similarly, Table~\ref{table:s=1} lists every next-to$^k$ leading-log expansion for $s=1$ (QED photon self-energy, see the background for a discussion of this) and $k$ from $0$ to $3$.

\begin{table}[h!]
\begin{tabular}{|c|c|} \hline
 Expansion & Expression \\
 \hline

$H_0(z)$ & $-a_{1,0}z $ \\  \hline
$H_1(z)$ & \begin{minipage}{0.8 \textwidth}
\[-{\frac {\ln  \left( 1+za_{{1,0}} \right) a_{{2,0}}}{a_{{1,0}}}} \]
\end{minipage}
 \\ \hline
$H_2(z)$ & \begin{minipage}{0.8 \textwidth}
\[-{\frac {{a_{{2,0}}}^{2}\ln  \left( 1+za_{{1,0}} \right) }{{a_{{1,0}}}^
{2} \left( 1+za_{{1,0}} \right) }}-{\frac { \left( -{a_{{2,0}}}^{2}+a_
{{3,0}}a_{{1,0}}+a_{{2,1}}{a_{{1,0}}}^{2} \right) z}{a_{{1,0}} \left( 
1+za_{{1,0}} \right) }}
\] 
\end{minipage}

\\ \hline  $ H_3(z)$ &
\begin{minipage}{0.8 \textwidth}
\begin{multline*}
{-\frac {2\,{a_{{1,0}}}^{4}a_{{2,2}}+2\,{a_{{1,0}}}^{3}a_{{3,1}}-{
a_{{1,0}}}^{2}a_{{2,0}}a_{{2,1}}+{a_{{1,0}}}^{2}a_{{4,0}}-2\,a_{{1,0}}
a_{{2,0}}a_{{3,0}}+{a_{{2,0}}}^{3}}{2{a_{{1,0}}}^{3}}} \\ -{\frac {a_{{2,0}
} \left( {a_{{1,0}}}^{2}a_{{2,1}}+a_{{1,0}}a_{{3,0}}-{a_{{2,0}}}^{2}
 \right) }{{a_{{1,0}}}^{3} \left( za_{{1,0}}+1 \right) }}
+{\frac {{a_{{2,0}}}^{3} \left( \ln  \left( za_{{1,0}}+1 \right) 
 \right) ^{2}}{ 2 \left( za_{{1,0}}+1 \right) ^{2}{a_{{1,0}}}^{3}}} \\ + {\frac { \left( -2\,a_{{1,0}}a_{{2,0}}a_{{3,0}}-2\,{a_{{1,0}}}^{2}a_
{{2,0}}a_{{2,1}} \right) \ln  \left( za_{{1,0}}+1 \right) }{ 2 \left( za
_{{1,0}}+1 \right) ^{2}{a_{{1,0}}}^{3}}} \\ 
+{\frac {2\,{a_{{1,0}}}^{
4}a_{{2,2}}+2\,{a_{{1,0}}}^{3}a_{{3,1}}+{a_{{1,0}}}^{2}a_{{4,0}}+{a_{{
1,0}}}^{2}a_{{2,0}}a_{{2,1}}-{a_{{2,0}}}^{3}}{ 2 \left( za_{{1,0}}+1
 \right) ^{2}{a_{{1,0}}}^{3}}} 
\end{multline*}
\end{minipage} \\
\hline
\end{tabular}
\caption{Next-to$^k$ leading-log expansions $H_k(z)$ for $s=1$}
\label{table:s=1}
\end{table}

This generic method has been implemented for $s=1$ and $s=2$ in a \texttt{maple} file, which is available along the arXiv version of this paper.

\section{Dominance in the further leading-log expansions}\label{sec asymptotic}

\subsection{A dichotomy between $s=1$ and $s \geq 2$}

Remember Theorem~\ref{theo:intermsofwgf} where the next-to$^k$ leading-log expansion can be written as
\[ H_k(z) = \kappa - \hspace*{-15pt} \sum_{\substack{\textrm{all possible }(\Delta,\mathcal G,\mathcal D)\\\textrm{ such that } t(\Delta,\mathcal G,\mathcal D) \leq k }} A_{\Delta,\mathcal G,\mathcal D} {\dfrac {\partial^k F_{\Delta,\mathcal G,\mathcal D}}{\partial z^k}}\left( -a_{1,0} z \right),\]
where $A_{\Delta,\mathcal G,\mathcal D}$ is a monomial in the $a_{i,j}$'s. One natural question is to wonder which types of diagrams asymptotically contribute in the $n$th coefficient of $H_k(z)$ when $n$ grows to~$+\infty$. 

\begin{defi} A type $(\Delta,\mathcal G,\mathcal D)$ is \emph{dominant} in the next-to$^k$ leading-log expansion if \[ \lim_{n\rightarrow \infty} \left| \frac{  [z^n] {\dfrac {\partial^k F_{\Delta,\mathcal G,\mathcal D}}{\partial z^k}(z)}}{[z^n] H_k(z)}\right| = c > 0,\]
where $[z^n] F(z)$ denotes the coefficient of $z^n$ in the expansion of $F(z)$.
\end{defi}

Interestingly, the dominant types are not of the same form for $s=1$ and $s \geq 2$:

\begin{theo} Let $k$ be fixed.

For $s \geq 2$, the dominants types $(\Delta,\mathcal G,\mathcal D)$ in the next-to$^k$ leading-log expansion are of the form $(1,\{\underbrace{(1,1),\dots,(1,1)}_{k_1 \textrm{ times} }\},\{\underbrace{2,\dots,2}_{k_2 \textrm{ times} }\})$ with $k_1 + k_2 = k$.
In this case, 
\begin{equation}[z^n] { F_{\Delta,\mathcal G,\mathcal D}}\sim \frac {(s-1)^{k_1}} {\Gamma(1-\frac 1 s)  k!}  { k \choose k_1 } \ln(n)^k n^{-\frac 1 s-1} s^{n-k-1}. 
\label{eq:estimategeq2}
\end{equation}

For $s = 1$ and $k \geq 1$, the dominant type in the next-to$^k$ leading-log expansion is unique: it is $(2, \varnothing , \{\underbrace{2,\dots,2}_{k - 1 \textrm{ times} }\} )$.
We have
\begin{equation}
[z^n] {{F_{2, \varnothing , \{2,\dots,2\} }} } \sim \frac 1 { (k-1)! } \ln(n)^{k-1} n^{-2}. 
\label{eq:s1estimate}
\end{equation}
\label{theo:dominant_types}
\end{theo}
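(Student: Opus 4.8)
The plan is to prove both parts by singularity analysis. By Theorem~\ref{theo:intermsofwgf} each $H_k$ is a finite linear combination of the $\partial^k F_{\Delta,\mathcal G,\mathcal D}(-a_{1,0}z)$ with monomial coefficients, and since taking $\partial^k$ and rescaling $z\mapsto -a_{1,0}z$ act the same way on every type (they all share a single dominant singularity, at $z=1/s$ before rescaling), classifying the dominant types reduces to comparing the asymptotics of $[z^n]F_{\Delta,\mathcal G,\mathcal D}$ across types. I would therefore first determine, for every admissible type, the leading term of the singular expansion of $F_{\Delta,\mathcal G,\mathcal D}$ at its dominant singularity, and then read off the coefficient asymptotics by the standard transfer theorem. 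The whole analysis is driven by the recursion of Section~\ref{ss:method}: $F_{\Delta,\mathcal G,\mathcal D}$ is obtained by applying the root-chord-insertion operator of Lemma~\ref{lem:operator} to a ``core'' generating function $A$ assembled from cases (1)--(3) of Lemma~\ref{lem:peeling_root_chords}.

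The heart of the argument is an inductive invariant on the singular expansion near $z=1/s$. For $s\geq 2$ (so $1/s\notin\mathbb{Z}$) I would show that every $F_{\Delta,\mathcal G,\mathcal D}$ has a leading singular term $c\,(1-sz)^{1/s}\ln^{p}(1-sz)$, with the exponent $1/s$ acting as a floor. The two elementary operations have predictable effects: an antiderivative raises the exponent by $1$ while preserving the top log power, whereas the operator of Lemma~\ref{lem:operator} sends an input of leading exponent $\alpha$ to one of leading exponent $\min(1/s,\alpha-1)$, and---crucially---when the input sits exactly at the \emph{critical exponent} $\alpha=1+1/s$ the integrand acquires a $(1-st)^{-1}$ factor, so integration raises the log power by one while dropping the exponent back to the floor. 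Tracing Section~\ref{ss:method}, a core built from the one-chord seed of decoration $\Delta$ together with the case (2)/(3) contributions always has leading exponent $\geq 1+1/s$; the critical value $1+1/s$ is hit (producing exactly one extra logarithm) precisely by the \emph{minimal} generators, namely a single $2\in\mathcal D$ (entering through one antiderivative in case (2), $d=2$) and a single $(1,1)\in\mathcal G$ (which forces $m=1$ in Lemma~\ref{lem:when_disconnection}), whereas larger elements of $\mathcal D$ or $\mathcal G$ and any seed $\Delta>1$ push the exponent strictly above $1+1/s$ and contribute no logarithm at the floor. Hence the top log power equals $\#\{(1,1)\in\mathcal G\}+\#\{2\in\mathcal D\}$, and since every summand of $t(\Delta,\mathcal G,\mathcal D)$ is a nonnegative cost while only these minimal generators pay $1$ for one logarithm, maximizing the log power under $t(\Delta,\mathcal G,\mathcal D)\leq k$ forces $\Delta=1$ with all generators minimal, i.e.\ exactly the types $(1,\{(1,1)^{k_1}\},\{2^{k_2}\})$ with $k_1+k_2=k$, each of top log power $k$.

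For the constant in \eqref{eq:estimategeq2} I would write $c(k_1,k_2)$ for the coefficient of $(1-sz)^{1/s}\ln^{k}(1-sz)$ in the dominant $F$ and extract from the two operations above the recursion $c(k_1,k_2)=-\tfrac{1}{sk}\big(c(k_1,k_2-1)+(s-1)\,c(k_1-1,k_2)\big)$ with seed $c(0,0)=-1$; here the sum over which minimal generator is peeled off comes from cases (2) and (3), and the factor $s-1$ is the value $sm-1$ at $m=1$ in Lemma~\ref{lem:when_disconnection}. Solving by induction on $k$ via Pascal's rule gives $c(k_1,k_2)=\tfrac{(-1)^{k+1}(s-1)^{k_1}}{s^{k}k!}\binom{k}{k_1}$. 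Feeding $c(k_1,k_2)(1-sz)^{1/s}\ln^{k}(1-sz)$ into the transfer theorem, with $[z^n](1-sz)^{1/s}\sim s^{n}n^{-1/s-1}/\Gamma(-1/s)$, $\ln^{k}(1-sz)=(-1)^k\ln^{k}\tfrac1{1-sz}$, and $\Gamma(-1/s)=-s\,\Gamma(1-1/s)$, yields \eqref{eq:estimategeq2} once the signs collapse. The case $s=1$ runs the same scheme at $z=1$, but is genuinely different because the floor $(1-z)^{1/s}=1-z$ is now \emph{analytic}: one checks that every $\Delta=1$ type is non-singular there, while the factor $s-1$ multiplying all case (3) contributions vanishes, so no $\mathcal G$-element ever contributes---this is the announced dichotomy. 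The seed $F_{2,\varnothing,\varnothing}=z+(1-z)\ln(1-z)$ supplies the basic order-$n^{-2}$ singularity, each $2\in\mathcal D$ raises the log power by one through the same critical-exponent mechanism (now at exponent $1$), and the cost $\Delta-1$ makes $\Delta=2$ with $k-1$ copies of $2$ the \emph{unique} maximizer of the log power under $t\leq k$. The analogous recursion $c_m=-c_{m-1}/(m+1)$ with $c_0=1$ for $F_{2,\varnothing,\{2^{m}\}}$ solves to $c_m=(-1)^m/(m+1)!$, and since $[z^n](1-z)\ln^{j}\tfrac1{1-z}\sim -\,j\,(\ln n)^{j-1}n^{-2}$ the transfer theorem gives \eqref{eq:s1estimate}.

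The main obstacle is the second paragraph: making the inductive invariant on the singular expansions fully rigorous through the three-case recursion of Section~\ref{ss:method}. One must carry enough of the expansion (not merely its leading term) to justify that antiderivatives and the Lemma~\ref{lem:operator} operator never create a term below the floor and that the log power increases only at the critical exponent, and one must verify the subdominance bookkeeping---that non-minimal generators and seeds $\Delta>1$ strictly raise the exponent for $s\geq2$, while for $s=1$ the vanishing of $s-1$ together with the analytic floor eliminates every competing type. Once this structural induction is in place, the coefficient recursions and the transfer-theorem computations are routine.
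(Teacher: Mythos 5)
Your proposal takes essentially the same route as the paper's own proof: an induction on the singular expansions of the $F_{\Delta,\mathcal G,\mathcal D}$ at the dominant singularity, driven by the generic method of Section~\ref{ss:method}, in which the log power increases exactly when the operator of Lemma~\ref{lem:operator} meets an input at the floor exponent $1/s$ (producing a $(1-st)^{-1}$ integrand), the constants are fixed by a Pascal's-rule recursion, coefficient asymptotics are read off by the Flajolet--Odlyzko transfer theorem, and the $s=1$ dichotomy is traced to the vanishing of the factor $sm-1$ at $m=1$. Your write-up is correct, and in fact your floor coefficients (e.g. $c(1,0)=\tfrac{1}{2}$ for $s=2$) agree with the explicit functions of Proposition~\ref{prop:ntlle} and transfer exactly to \eqref{eq:estimategeq2} and \eqref{eq:s1estimate}, whereas the paper's intermediate displays such as \eqref{eq:sing2} carry harmless sign/normalization typos.
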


The proof of this theorem is sketched in the next subsection.

Remark that the estimate for $s=1$ can be seen as a limit $s \rightarrow 1$ of the estimate~\eqref{eq:estimategeq2} for $s=2$ (this is a purely formal remark since $s$ is an integer parameter and not a real one).

This theorem naturally gives an asymptotic estimate for the $n$th coefficient of the next-to$^k$ leading-log expansion, which extends the asymptotic results of \cite{CYchord}. 

\begin{cor} Let us assume that $a_{1,0} \neq 0$ and $ a_{2,0} + (s-1)a_{1,1}a_{1,0} \neq 0$. 

For $s \geq 2$, the $n$th coefficient of $H_k(z)$ grows asymptotically like
\begin{equation}
[z^n] H_k(z) \sim \frac {(-1)^{n}} {\Gamma(1-\frac 1 s)  k!}   a_{1,0}^{n-k} \left( a_{2,0} + (s-1)a_{1,1}a_{1,0} \right)^k  \ln(n)^k n^{k-\frac 1 s-1} s^{n-1},
\label{eq:hk_sgeq2}
\end{equation} 
while for $s=1$ and $k \geq 1$, the $n$th coefficient of $H_k(z)$ grows asymptotically like
\[ [z^n] H_k(z) \sim  \frac {(-1)^{n}} { (k-1)! }  (a_{1,0})^{n-k} a_{2,0}^k \ln(n)^{k-1} n^{k-2}. \] 
\end{cor}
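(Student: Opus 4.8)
The plan is to read off the asymptotics of $[z^n]H_k$ directly from the expression
\[ H_k(z) = \kappa - \sum_{\substack{(\Delta,\mathcal G,\mathcal D)\\ t(\Delta,\mathcal G,\mathcal D)\leq k}} A_{\Delta,\mathcal G,\mathcal D}\,\frac{\partial^k F_{\Delta,\mathcal G,\mathcal D}}{\partial z^k}(-a_{1,0}z) \]
recalled at the start of this section, where $A_{\Delta,\mathcal G,\mathcal D}$ is the monomial of Theorem~\ref{theo:intermsofwgf}. The constant $\kappa$ contributes nothing to $[z^n]$ for $n\geq1$, and by Theorem~\ref{theo:dominant_types} every non-dominant type yields a coefficient negligible against $[z^n]H_k$; so asymptotically the sum may be replaced by its restriction to the dominant types. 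Everything then reduces to transferring the $F$-asymptotics of Theorem~\ref{theo:dominant_types} through the operator $\partial^k/\partial z^k$ and the rescaling $z\mapsto -a_{1,0}z$, multiplying by $A_{\Delta,\mathcal G,\mathcal D}$, and summing.

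First I would record the transfer identity: if $F(z)=\sum_m f_m z^m$ then
\[ [z^n]\frac{\partial^k F}{\partial z^k}(-a_{1,0}z)=(-a_{1,0})^n\,\frac{(n+k)!}{n!}\,[z^{n+k}]F. \]
Substituting $n\mapsto n+k$ in \eqref{eq:estimategeq2} and using $\frac{(n+k)!}{n!}\sim n^k$, $\ln(n+k)^k\sim\ln(n)^k$, $(n+k)^{-1/s-1}\sim n^{-1/s-1}$ gives, for a dominant type of parameters $(k_1,k_2)$ with $k_1+k_2=k$,
\[ [z^n]\frac{\partial^k F_{\Delta,\mathcal G,\mathcal D}}{\partial z^k}(-a_{1,0}z)\sim\frac{(-a_{1,0})^n(s-1)^{k_1}}{\Gamma(1-\tfrac1s)\,k!}\binom{k}{k_1}\ln(n)^k\,n^{k-\frac1s-1}\,s^{n-1}. \]

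For $s\geq2$ the dominant types are $(1,\{(1,1)\}^{k_1},\{2\}^{k_2})$, for which $t(\Delta,\mathcal G,\mathcal D)=k$, hence $a_{\Delta,k-t}=a_{1,0}$ and the formula of Theorem~\ref{theo:intermsofwgf} collapses to $A_{\Delta,\mathcal G,\mathcal D}=a_{1,1}^{k_1}(a_{2,0}/a_{1,0})^{k_2}$. Multiplying by the factor above and summing over $k_1+k_2=k$, the combinatorial weight $\binom{k}{k_1}[(s-1)a_{1,1}]^{k_1}[a_{2,0}/a_{1,0}]^{k_2}$ is exactly the binomial expansion of
\[ \Bigl((s-1)a_{1,1}+\frac{a_{2,0}}{a_{1,0}}\Bigr)^{k}=\frac{\bigl(a_{2,0}+(s-1)a_{1,1}a_{1,0}\bigr)^{k}}{a_{1,0}^{k}}. \]
Combining $(-a_{1,0})^n$ with this $a_{1,0}^{-k}$ produces $a_{1,0}^{n-k}$ and an alternating sign, and collecting the remaining factors gives the estimate \eqref{eq:hk_sgeq2}. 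The case $s=1$, $k\geq1$ is simpler since the dominant type $(2,\varnothing,\{2\}^{k-1})$ is unique: here again $t=k$, so $a_{\Delta,k-t}=a_{2,0}$, Theorem~\ref{theo:intermsofwgf} gives $A=a_{1,0}^{-k}a_{2,0}^{k}$, and transferring \eqref{eq:s1estimate} through the same identity yields the stated estimate directly, with no summation. Throughout, the hypotheses $a_{1,0}\neq0$ and $a_{2,0}+(s-1)a_{1,1}a_{1,0}\neq0$ are exactly what keeps this leading term from vanishing.

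The content is essentially Theorem~\ref{theo:dominant_types}; the corollary is a bookkeeping exercise, and the one step that genuinely needs care is the tracking of constants and signs: the overall minus sign in the $H_k$ expansion, the factor $(-1)^n$ coming from $z\mapsto -a_{1,0}z$, and the passage of the $\Gamma(1-\tfrac1s)$ constant together with the exponent shifts $s^{n-k-1}\mapsto s^{n-1}$ and $n^{-1/s-1}\mapsto n^{k-1/s-1}$ through the $k$-fold derivative. The pleasant step, and the reason a clean closed form exists at all, is the binomial resummation that assembles the $k$ dominant types of the $s\geq2$ case into the single power $(a_{2,0}+(s-1)a_{1,1}a_{1,0})^{k}$.
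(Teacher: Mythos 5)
Your proposal is correct and follows essentially the same route as the paper's own proof: restrict to the dominant types given by Theorem~\ref{theo:dominant_types}, push their coefficient asymptotics through the $k$-fold derivative and the substitution $z\mapsto -a_{1,0}z$ (with the same $\frac{(n+k)!}{n!}\sim n^k$ transfer identity), and collapse the sum over $k_1+k_2=k$ by the binomial theorem, the $s=1$ case being the same but with a single dominant type. The one blemish --- that literally carrying the overall minus sign of Theorem~\ref{theo:intermsofwgf} through your assembly would produce $(-1)^{n+1}$ rather than the stated $(-1)^{n}$ --- is shared with, and inherited from, the paper's own proof, which silently drops that sign in its very first display; it reflects a sign inconsistency internal to the paper (compare the stated asymptotics against $H_0(z)=\sqrt{1+2a_{1,0}z}-1$) rather than a defect of your argument relative to it.
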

\begin{proof} (from Theorem~\ref{theo:dominant_types}) Let us begin by $s \geq 2$. By keeping only the dominant types in Theorem~\ref{theo:intermsofwgf} (the other ones are negligible by definition), one get the asymptotic estimate
\[ [z^n] H_k(z) \sim \sum_{k_1 = 0}^k   {a_{1,0}}^{k-1}  a_{1,0} \left(\frac{a_{2, 0}}{a_{1,0}^2}\right)^{k_1}  \left(\frac {a_{1,1}}{a_{1,0}}\right)^{k-k_1}  
[z^n] {\dfrac {\partial^k F_{1, \{(1,1)\}^{k_1} ,\{2\}^{k-k_1}}}{\partial z^k}}\left( -a_{1,0} z \right)
,\]
where $\{x\}^{N}$ denotes the multiset containing the element $x$ $N$ times. Note that the $n$th  coefficient of something of the form  ${\dfrac {\partial^k f(z)}{\partial z^k}}$ is the $(n+k)$th coefficient of $f(z)$ multiplied by $(n+k)(n+k-1)\dots(n+1)$ (which is asymptotically equivalent to $n^k$).
Therefore the asymptotic estimate of Theorem~\ref{theo:dominant_types} leads (after simplification) to 
\[ [z^n] H_k(z) \sim \sum_{k_1 = 0}^k   {a_{1,0}}^{k} \left(\frac {a_{1,1}}{a_{1,0}}\right)^{k_1}   \left(\frac{a_{2, 0}}{a_{1,0}^2}\right)^{k - k_1}  
 \frac { (s-1)^{k_1}}{\Gamma(1-\frac 1 s)  k!}  { k \choose k_1 } \ln(n)^k n^{k-\frac 1 s-1} s^{n-1} (-a_{1,0})^n
.\]
But by the binomial theorem, $\sum_{k_1 = 0}^k   { k \choose k_1 }\left(\frac{a_{2, 0}}{a_{1,0}^2}\right)^{k_1}  (s-1)^{k_1} \left(\frac {a_{1,1}}{a_{1,0}}\right)^{k-k_1} = \left(\frac{a_{2,0}+(s-1)a_{1,1}a_{1,0}}{a_{1,0}^2}\right)^k$. One naturally recover estimate~\eqref{eq:hk_sgeq2}.

The proof is similar for $s=1$.
\end{proof}

Observe that only $a_{1,1},a_{2,0}$ and $a_{1,0}$ appear in the asymptotic expressions.  This means that asymptotically only these coefficients matter for any next-to${}^k$ leading-log expansion; that is only the residues of the one and two loop primitives and the next term in the expansion of the one loop primitive matter asymptotically.  This is in contrast to what one might expect of the full expansion of the Green function, as is discussed further in Section~\ref{subsec outlook}.

Additionally, the asymptotic forms are factored and the coefficients of the expansions of the primitives only appear via powers of $a_{1,0}$ and $(a_{2,0}+(s-1)s_{1,1}a_{1,0})$; no other expressions in these coefficients appear.  This is a highly simplified form compared to what might have been expected \textit{a priori}, though the physical significance of this form is not clear to us.

The hypotheses $a_{1,0}\neq 0$ and $a_{2,0} + (s-1)a_{1,1}a_{1,0}\neq 0$ are very reasonable physically.  The former says that there is a divergent 1-loop diagram, and the latter says that the different generating functions which can appear in the next-to-next-to leading log expansion for $s>1$ in fact do appear ---  there is no surprise simplification.  Should these hypotheses not hold in a situation of physical interest the leading contributions would be different, but a similar analysis to the above could be done to determine the asymptotic form.

It is tempting to consider summing the asymptotic forms in $k$ in order to get some sense of how $G(x,L)$ is behaving for large $n$.  However, even putting aside mathematical rigour, this is unlikely to give a good heuristic indication.  Not only is there no uniformity of the estimates in $k$, but in this situation we would expect, quite the opposite, that as $k$ grows the asymptotic form takes longer to become a good approximation (because of the presence of the logarithm).  As $k$ grows there are more diagram types in play and this would be expected to increasingly postpone the eventual dominance of the dominant types.

\subsection{Proof of Theorem~\ref{theo:dominant_types}} Since the proof of Theorem~\ref{theo:dominant_types} is very technical, only the main lines of the proof will be given here.

The main ingredient is here the transfer theorem from Flajolet and Odlyzko~\cite[Section VI.2.]{Flajolet-Sedgewick}. Provided some conditions of analyticity, this theorem states that the asymptotic behavior of the $n$th coefficient of a power series $f(z)$ results from the behaviour of $f(z)$ around its main singularity. In our context, the main singularity is simply the radius of convergence, which is $1/s$.

For example, if we observe that $f(z) - f(1/s) \sim_{z \rightarrow 1/s} (1-sz)^{-\alpha}$ (with $\alpha$ a complex number not in $\mathbb Z_{\leq 0}$), then the transfer theorem indicates that $[z^n] f(z) \sim \frac{n^{\alpha - 1}}{\Gamma(\alpha)} s^n$. Numerous asymptotic equivalences are given in ~\cite{Flajolet-Sedgewick} in this way, including the ones involving the logarithm. 

To use the transfer theorem (and thus prove Theorem~\ref{theo:dominant_types}), we proceed to an induction about the singular behaviour of the generating functions $F_{\Delta,\mathcal G,\mathcal D}$ for all types $(\Delta, \mathcal G, \mathcal D)$.

For $s \geq 2$, the induction establishes that if $(\Delta, \mathcal G, \mathcal D) = (1,\{\underbrace{(1,1),\dots,(1,1)}_{k_1 \textrm{ times} }\},\{\underbrace{2,\dots,2}_{k_2 \textrm{ times} }\})$, then
\begin{equation}
F_{\Delta,\mathcal G,\mathcal D}(z) - F_{\Delta,\mathcal G,\mathcal D}\left(1/s\right)  \sim_{z \rightarrow 1/s} - (s-1)^{k_1} \frac {s^{-(k_1+k_2+1)}} {(k_1 + k_2)!}  { k_1 + k_2 \choose k_1 } (1-sz)^{1/s} \ln(1-sz)^{k_1 + k_2}
\label{eq:sing2}
\end{equation} 
and if $(\Delta, \mathcal G, \mathcal D)$ is not of the form $(1,\{(1,1),\dots,(1,1)\},\{2,\dots,2\})$, then
\begin{equation}
F_{\Delta,\mathcal G,\mathcal D}(z) - F_{\Delta,\mathcal G,\mathcal D}\left(1/s\right) \sim_{z \rightarrow 1/s} C (1-sz)^{a + 1/s} \ln(1-sz)^{b},
\label{eq:22}
\end{equation} 
 where $a \geq 0$ and $ b < t(\Delta,\mathcal G,\mathcal D) =  \sum_{(d,g) \in \mathcal G} (d + g - 1) + \sum_{d \in \mathcal D} (d - 1) + \Delta - 1$ (cf Theorem~\ref{theo:intermsofwgf}), and $C$ a constant.

We use the generic method described in Subsection~\ref{ss:method} to derive the singular behavior of every function $F_{\Delta,\mathcal G,\mathcal D}(z) $. The base case of the induction is given by $\mathcal G = \mathcal D = \varnothing$. By some computations from items (1a) and (2) of the generic method, we get the estimate
\[ F_{\Delta,\mathcal G,\mathcal D}(z) - F_{\Delta,\mathcal G,\mathcal D}\left(1/s\right)  \sim_{z \rightarrow 1/s} -\frac{1}{\prod_{j=1}^\Delta js - 1} (1-zs)^{1/s},\]
which corroborates the induction hypothesis. Let us suppose now that $\mathcal D \neq \varnothing$ or $\mathcal G \neq \varnothing$. 
By using the induction hypotheses, we can proceed to some tedious analysis of items (1b) and (1c). Eventually one sees that if $(\Delta, \mathcal G, \mathcal D)$ is of the form $\left(1,\{(1,1),\dots,(1,1)\},\{2,\dots,2\}\right)$, then the singular behavior of $A'(z)$ is of the form $\kappa (1-zs)^{1/s}  \ln(1-sz)^{k_1+k_2-1}$, coming from the contributions of $F_{\Delta, \mathcal G, \mathcal D \backslash 2}$ (of singularity $\sim - \frac {(s-1)^{k_1} s^{-(k_1+k_2)}} {(k_1 + k_2 - 1)!}  { k_1 + k_2 - 1 \choose k_1 } (1-sz)^{1/s} \ln(1-sz)^{k_1 + k_2 - 1}$) 
and 
$F_{\Delta, \mathcal G \backslash (1,1), \mathcal D}$ (of singularity $\sim - (s-1) \frac {(s-1)^{k_1-1} s^{-(k_1+k_2)}} {(k_1 + k_2 - 1)!}  { k_1 + k_2 - 1 \choose k_1 - 1 } (1-sz)^{1/s} \ln(1-sz)^{k_1 + k_2 - 1}$).
By summing the two contributions, one can see that $\kappa = \frac {(s-1)^{k_1} s^{-(k_1+k_2)}} {(k_1 + k_2 - 1)!} \left ( { k_1 + k_2 - 1 \choose k_1 } + { k_1 + k_2 - 1 \choose k_1 - 1 } \right)$, which is $\kappa = \frac {(s-1)^{k_1} s^{-(k_1+k_2)}} {(k_1 + k_2 - 1)!} { k_1 + k_2 \choose k_1 } $ by Pascal's rule. By transformation (2) from the generic method, we deduce that
the singularity of $F_{\Delta, \mathcal G, \mathcal D}(z)$ is in 
\[ \kappa (1-sz)^{1/s} \int (1-st)^{-1} \ln(1-st)^{k_1 + k_2 + 1} dt  = \frac{\kappa}{s (k_1 + k_2)} (1-sz)^{1/s} \ln(1-sz)^{k_1 + k_2 }, \]
as predicted by \eqref{eq:sing2}.

As for types $(\Delta, \mathcal G, \mathcal D) \neq \left(1,\{(1,1),\dots,(1,1)\},\{2,\dots,2\}\right)$, we use a similar reasoning to show that the singular behavior \eqref{eq:22} of $F_{\Delta, \mathcal G, \mathcal D}$ is preserved. The induction is thus proved.

%

To recover the statement of the theorem from the induction, we apply the transfer theorem to \eqref{eq:sing2}. It directly gives Estimate \eqref{eq:estimategeq2}. The analyticity of $F_{\Delta,\mathcal G,\mathcal D}(z) $ required by the transfer theorem is quite straightforward since the function is a polynomial in $(1-sz)^{1/s}$ and $\ln(1-sz)$ (this latter fact can also be proved by induction).

Finally, to show the dominance of the types $\left(1,\{(1,1),\dots,(1,1)\},\{2,\dots,2\}\right)$, we use again the transfer theorem to \eqref{eq:22}. We deduce that in the expression of $H_k$ (Theorem~\ref{theo:intermsofwgf}), only the terms in $ \frac {\partial^k F_{1,\{(1,1),\dots,(1,1)\},\{2,\dots,2\}}^k } {\partial z^k}$ prevail asymptotically.

Concerning $s=1$, a similar induction holds. It states that for  $(\Delta, \mathcal G, \mathcal D) = (2,\varnothing,\{\underbrace{2,\dots,2}_{k-1 \textrm{ times} }\})$, 
\[ F_{\Delta,\mathcal G,\mathcal D}(z) - F_{\Delta,\mathcal G,\mathcal D}\left(1\right)  \sim_{z \rightarrow 1} - \frac 1 {(k-1)!} (1-z) \ln(1-z)^k \]
and for other type $(\Delta, \mathcal G, \mathcal D)$,
\[ F_{\Delta,\mathcal G,\mathcal D}(z) - F_{\Delta,\mathcal G,\mathcal D}\left(1\right)   =_{z \rightarrow 1} o \left( (1-z)\ln(1-z)^{t(\Delta,\mathcal G,\mathcal D)} \right).\]
Why are there these differences from $s \geq 2$? When counting diagrams satisfying (1c) in the generic method, we multiply our coefficients by $sm - 1$. But if $m=1$, which occurs whenever  $(\delta,\gamma) = (1,1)$, then we multiply by $0$. That is why the main contribution for $s=1$ to the singularity of $H_k$ cannot come from a diagram of type $(\Delta,\mathcal G,\mathcal D)$ with $(1,1) \in \mathcal G$. The same method applies otherwise; one similarly proves the induction and then the theorem.

\section{Discussion}

\subsection{Comparison with Kr\"uger and Kreimer}\label{subsec compare}

In \cite{KKllog} Kr\"uger and Kreimer developed an approach to calculating
next-to${}^{j}$-leading log expansions, using a modified shuffle structure on the new numbers appearing.  As with our techniques, they break each log expansion into subexpansions, these subexpansions can be interpreted as generating functions for certain combinatorial objects, and appropriately weighting these generating functions and summing gives the result.  The combinatorial objects in their case are words.  Roughly, the words are over the alphabet of primitive Feynman diagrams in the theory under study, but they also must augment their alphabet with new letters corresponding to commutators of old letters as well as new letters coming from the quasi-shuffle terms of their shuffles.

Their approach is automatable with two caveats.  First, the generating functions involving these new commutator letters do not fit into their master theorem and so do not have a fully automatic theory worked out for them; this is the case of indexed matrices in their language.  Second, their master theorem involves some coefficients which come from counting certain matrices, but a general enumerative theory for this has not been worked out.

Their techniques, as ours do, apply to quantum field theories quite generally and they worked out the details for the propagator in Yukawa theory and the photon self-energy in quantum electrodynamics up to the next-to-next-to-leading log order.  These examples correspond to $s=2$ and $s=1$, respectively, in our set up.  Their techniques also have the advantage of being tightly related to the Lie algebra structure of Feynman diagrams.

\medskip

For the purposes of comparing our work with theirs, it is worth going into a bit more detail on the form of their results.  They express each log expansion as a sum of products of a \emph{period} and a generating function.  Their generating functions, like our types, are based on having any number of the smallest contributions (in their case shuffles with their base letter $a$, in our case nonterminal chord of decoration 1). Their generating functions are, in our language, functions of the single variable $xLa_{1,0}$.  Note that all of our generating functions depend on $x$ and $L$ only via the product $xLa_{1,0}$ and what remains is a factor which is a small rational function of the $a_{i,j}$.  In this way we also have a product of a generating function in $xLa_{1,0}$ and another factor.  For them, this other factor is a period, and they identify each period (non-canonically) with a linear combination of products of Feynman graphs where the period then is the Feynman period of the corresponding combination of Feynman graphs.  Feynman periods \cite{Brbig,Sphi4} are certain renormalization scheme independent residues of Feynman integrals; they are periods in the sense of Kontsevich and Zagier \cite{Ko-Za}, that is they are integrals of algebraic functions over algebraic domains.

In particular, in our notation, the primitive (or sum of primitives) at $k$ loops has $F_k(\rho) = a_{k,0}/\rho + a_{k,1} + a_{k,2}\rho + a_{k,3}\rho^2 + \cdots$ as the expansion of its regularized Feynman integral, and so the period of this primitive is $a_{k,0}$.  These periods correspond to the letters which Kr\"uger and Kreimer notate $a_k$ in their words.

As noted above, they also have additional letters in their words, and correspondingly additional periods, from commutators and quasi-shuffle terms.  These no longer correspond to individual $a_{k,i}$ for us, but rather to combinations.  For example the quasi-shuffle term coming from shuffling two copies of the one-loop primitive is for them notated $\Theta(a_1,a_1)$.  It corresponds to the linear combination of graphs
\[
\includegraphics{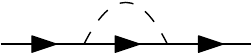}\ \includegraphics{images/Yukawad} - 2\includegraphics{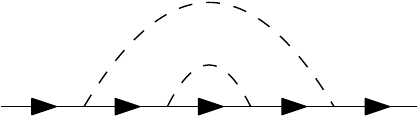}
\]
in Yukawa theory, and similarly in other theories.  The regularized Feynman integrals of each term in this sum have $1/\rho^2$ terms, but these cancel, leaving only a simple pole with residue $a_{1,1}$ in our notation.  In more complicated examples, these additional periods of Kr\"uger and Kreimer will correspond to linear combinations of products of our $a_{i,j}$.

\medskip

To compare our results to theirs, we need to consider how their periods line up with the different combinations of the $a_{i,j}$ and to consider how the remaining generating functions in $xLa_{1,0}$ line up.  In both cases there is a discrepancy. 
However, through personal discussion with Kr\"uger and Kreimer (personal communication, 2018) we believe that we understand this discrepancy.

Our results are as found in Table~\ref{table:s=1} and Propositions~\ref{prop:lls2}, \ref{prop:ntlle}, and \ref{NNLLs2}.  See Section A.4 of \cite{KKllog} for an analogous statement of their results.
We see a precise correspondence in the leading log and next-to leading log results.

At next-to next-to leading order we begin to see differences.  First, no matter how we try to align expressions in our $a_{i,j}$ with their periods we cannot get the same generating functions as we do not see terms with $(1-2xLa_{1,0})^{-5/2}$ in the Yukawa case, nor $(1-xLa_{1,0})^{-2}$ in the QED case.  For Kr\"uger and Kreimer these terms came from generating series for words where one of the letters is a commutator of other letters.  From personal discussion with Kr\"uger, this is due to an error in their paper.  Their general methods are correct, but in working out their word reduction algorithm certain systematic errors were made that impacted these cases.

The second important difference is the number of periods.  They have eight periods in the Yukawa case (fewer in QED as some things vanish).  These should correspond to some sums of products of $a_{i,j}$s.  However, we see only $a_{3,0}$, $a_{2,0}^2$, $a_{2,1}$, $a_{1,2}$, $a_{1,1}a_{2,0}$, $a_{1,1}^2$, all with appropriate powers of $a_{1,0}$ as well, and this is all that can show up for reasons of homogeneity.  In general, Kr\"uger and Kreimer will see as many periods as rooted trees, but we will only see as many expressions in the $a_{i,j}$ as appropriately weighted partitions, so we simply cannot find as many periods as they do asymptotically.  Having discussed with Kreimer and Kr\"uger, we believe that their results make no assumptions on what the Feynman rules are, and so every different diagram, every commutator, and every quasi-shuffle term, being algebraically different is treated as an independent period.  Our approach, by contrast, builds all the Feynman integrals through the Dyson-Schwinger equation out of the $F_k(\rho)$. Consequently we have built right into our setup additional information which tells us how each Kr\"uger-Kreimer period can be written in terms of the $a_{i,j}$.  These expressions for the Kr\"uger-Kreimer periods are not algebraically independent, but satisfy algebraic relations.

\medskip

Subsequently, in the process of correcting these errors Kr\"uger has found an alternate approach which comes more directly from the renormalization group equation and is the subject of work in preparation \cite{Krge}. 

\subsection{Gauge theory dichotomy}

Running throughout the results of this paper, there has been a dichotomy between the $s=1$ case and the $s>1$ case.

The $s=1$ case corresponds to propagator corrections for a particular such as the photon in QED, because $s=1$ is the case where the one-loop correction has no internal line of its own type, and one additional such internal line appears at each subsequent loop order.

The gluon in QCD also has an $s=1$ type of behaviour, though for more complicated reasons. The one-loop gluon correction with a fermion (or ghost) bubble is straightforwardly $s=1$.  The one-loop gluon correction with a gluon bubble is in and of itself an $s=3$ case (as each additional gluon edge with further 3-valent vertices would add three more gluon insertion places; the same combinatorics as in a scalar $\phi^3$ theory.)  However, the different vertices in QCD do not have independent coupling constants.  In the Hopf algebraic approach to Feynman diagrams, this is dealt with by taking the combinatorial versions of the invariant charges for each vertex and then working in the quotient Hopf algebra given by modding out by the ideal generated by identifying these invariant charges.  The combinatorial versions of the invariant charges are discussed in section 3.3.2 of \cite{kythesis}; the mathematics and physics of these quotient algebras are discussed in the context of Ward and Slavnov-Taylor identities in \cite{anatomy} and \cite{vS}.  In order to speak of a single value of $s$ for the gluon, we need to work in this quotient algebra\footnote{An analysis not working in a quotient algebra should also be possible but would involve machinery similar to that which will be needed for extending the chord diagram expansions to systems of Dyson-Schwinger equations.  In both cases this remains work for the future.}, and in this algebra since the gluon bubble can be expressed in terms of other one-loop gluon corrections, we end up with an $s=1$ behaviour as we did for the QED photon.  

Consequently, the $s=1$ vs $s>1$ dichotomy can be viewed as a gauge particle vs non-gauge particle dichotomy.  In particular, in the $s=1$ case we see substantial simplifications.  The form of the generating functions for the $N^kLL$ expansions is strictly simpler in $s=1$ compared to $s>1$.  This can be seen in the explicit results we give up to the next-to-next-to order where there are simply fewer terms in the closed forms of the functions for $s=1$ when compared to $s>1$.  We also see simplifications in the asymptotic forms.  Fewer classes of chord diagrams are asymptotically relevant for $s=1$ and correspondingly fewer of the $a_{i,j}$ are involved in the dominant contributions.  Specifically only $a_{1,0}$ and $a_{2,0}$ appear in the asymptotic expression for $s=1$ while $a_{1,0}$, $a_{2,0}$ and $a_{1,1}$ appear in the asymptotic expression for $s>1$.

This is consistent with the fact that physicists have long observed extra cancellations in gauge theory expansions, see for instance the summary given by Le Guillou and Zinn-Justin in  \cite[ Sections 2 and 3]{GZJlargeorder} as well as the papers reprinted later in that volume.

Note that $s=1$ refers specifically to the gauge particle, while other particles in a gauge theory will typically have $s>1$, e.g. the QED fermion has $s=2$.  Thus, the dichotomy is between gauge particles and other particles rather than between gauge theories and non-gauge theories. 

Note also that we are not directly using the gauge invariance, and so particles with the same Feynman diagram combinatorics as a gauge particle would also fall into the $s=1$ case.  Nor can the $s$ parameter distinguish a toy like scalar electrodynamics from a realistic gauge theory.  These distinctions will appear only in the $a_{i,j}$'s.  Further effects of the gauge group on the asymp\-totic picture would necessarily come from additional cancellations or identities between the $a_{i,j}$.  It is particularly striking, however, that we are already seeing simplifications and a clear dichotomy arising out of the diagram combinatorics for gauge particles. We hope this is a useful framework to clarify the understanding of gauge theory asymptotics.

\subsection{Outlook}\label{subsec outlook}

All of our results are systematic and automatable giving an algorithmic procedure for calculating any $N^kLL$ expansion as well as its asymptotics.  \texttt{Maple} code is included with the auxiliary files in the arXiv version of this paper.

The chord diagram expansion is an expansion for the Green function itself, not just for the various $N^kLL$ expansions.  Consequently, in principle we can access information on the large order behavior of the full Green function using this expansion.
The combinatorial analysis is more difficult for the Green function as a whole, principally because we lose the restriction on the location of the first terminal chord, and this restriction was very useful in all of our arguments.  

In Section~\ref{sec asymptotic} we analyzed that asymptotic behavior of the $N^kLL$ expansions and showed that only the $a_{1,1}, a_{2,0}$, and $a_{1,0}$ appear in the asymptotic expression.

Such results, however, are not sufficient to understand the asymptotic behavior of the coefficients in the expansion of the Green function unless we also understand how the $a_{i,j}$ grow.  This consideration is not important for the $N^kLL$ expansions because the $N^kLL$ expansion can involve only $a_{i,j}$ for $i,j\leq k$, so there is an automatic truncation simply by the nature of the expansion.   For the full Green function, however, if $a_{k,0}$ or $a_{1,k}$, or other subsequences grow fast enough, then terms involving these elements could dominate even with other terms dominating each $N^kLL$ expansion.

Our methods will not be able to shed light on the growth of the $a_{i,j}$ as this is a more strictly analytic concern.  If we assume all the $a_{i,j}$ are roughly of the same order then from the $N^kLL$ results the $a_{1,0}$, $a_{1,1}$, $a_{2,0}$ would also dominate in the full Green function.  Rigorous results in this direction are likely accessible, though we did not pursue this here.

Perhaps more interesting is the potential for conditional results which would say, for instance, that if the terms involving $a_{k,0}$ with $k\rightarrow \infty$ are asymptotically important, then the sequence $a_{k,0}$ must grow at at least some specified rate while the other coefficients remain bounded.  Results of this form for the $a_{k,0}$ or other subsequences would be interesting because the different subsequences correspond to different sources of divergence.  The dominance of the $a_{k,0}$ would correspond to asymptotic dominance by primitive Feynman diagrams as the $a_{k,0}$ are the leading terms for each primitive.  What is currently known about whether or not the contributions of the primitive diagrams dominates is explained in section 4 of \cite{MKlargeorder}; while it remains conjectural, McKane therein explains that in the 80s the contribution of the primitives was generally expected to dominate, and that this remains plausible but unproven from today's perspective.  Dominance of the low terms, $a_{1,0}$, $a_{1,1}$, $a_{2,0}$ etc., corresponds to renormalon-type behavior as these terms come from small Feynman diagrams iterated into themselves repeatedly.  It is less clear what dominance by the $a_{1,k}$ or other subsequences would mean.  Perhaps, via resurgence, some of these subsequences show the instanton behavior of the theory.  The conditions in such conditional results could then be compared to asymptotic results or estimates obtained in other ways.

We can also ask what bearing the present results have on past work of one of us with other coauthors \cite{vBKUY, vBKUY2}.  There we worked with, and required assumptions on, the mysterious function $P(x)$ first defined in \cite{kythesis}.  If we write $G(x,L) = 1-\sum_{i\geq 1} \gamma_i(x) L^i$ then
\[
P(x) = \gamma_1 + 2\gamma_2.
\]
  {}From the chord diagram expansion, for $G(x,L)$ satisfying a Dyson-Schwinger equation of the form \eqref{eq:gen case}, we have
  \[
  P(x) = \sum_{\substack{\omega_s\text{-marked}\\\text{diagram } C}} (a_{d(t_1), t_1-1}
  - a_{d(t_1), t_1-2})A(C) x^{\|C\|} 
  \]
  where the $t_1$ appearing in the sum is $t_1(C)$.  This equation shows that $P(x)$ is closely related to, but simpler than, the full Green function, as we sum over the same diagrams as for the full Green function, but in place of the polynomial in $L$ we have only a difference of adjacent $F_{d(t_1)}(\rho)$ coefficients, and this difference depends only on the index of the first terminal chord.  This expression for $P(x)$ is not as easy to work with as the N${}^k$LL expansions since we cannot restrict to diagrams with large first terminal chord. However it has a decomposition based on the first terminal chord, making it appear more approachable than the full Green function.  For \cite{vBKUY} and \cite{vBKUY2} we required assumptions on the growth of $P(x)$ as a function of $x$ for large $x$.  Chord diagram techniques only tell us about the series expansion of $P(x)$ around $0$, so we would need some resummation results to move from asymptotic results of the latter type to those of the former type.  However, we may be lucky if the differences $a_{d(t_1), t_1-1} - a_{d(t_1), t_1-2}$ are often 0 or are small and approaching 0 sufficiently quickly in the relevant limits.  Then the convergence of the series for $P(x)$ may be better than it naively appears.  This also remains a question for the future.

\bibliographystyle{plain}
\bibliography{expansions}

\end{document}